\newcommand{\be}{\begin{equation}}
\newcommand{\bel}[1]{\begin{equation}\label{#1}}
\newcommand{\ee}{\end{equation}}
\newcommand{\bea}{\begin{eqnarray}}
\newcommand{\eea}{\end{eqnarray}}
\newcommand{\balign}{\begin{align}}
\newcommand{\ealign}{\end{align}}
\newcommand{\ba}{\begin{array}}
\newcommand{\ea}{\end{array}}
\newcommand{\bfig}{\begin{figure}}
\newcommand{\efig}{\end{figure}}
\newcommand{\exval}[1]{\mbox{$\langle \, {#1}\, \rangle$}}
\newcommand{\rmd}{\mathrm{d}}
\newcommand{\rme}{\mathrm{e}}
\newcommand{\R}{{\mathbb R}}
\newcommand{\Z}{{\mathbb Z}}
\newtheorem{theorem}{Theorem}[section]
\newtheorem{definition}[theorem]{Definition}
\newtheorem{proposition}[theorem]{Proposition}
\newtheorem{corollary}[theorem]{Corollary}
\newtheorem{remark}[theorem]{Remark}
\def\qed{\hfill$\Box$\par\medskip\par\relax}
\begin{document}

\title{Mesoscale mode coupling theory for the weakly asymmetric simple exclusion process}
\author{Gunter M. Sch\"utz}
\maketitle

{\small 
Departamento de Matemática,
Instituto Superior Técnico,
Av. Rovisco Pais,
1049-001 Lisboa,
Portugal
\\
\noindent Email: gunter.schuetz@tecnico.ulisboa.pt
%
}

\begin{abstract}
The asymmetric simple exclusion process and its analysis by 
mode coupling theory (MCT) is reviewed. To treat the weakly 
asymmetric case at large space scale $x\varepsilon^{-1}$, 
large time scale $t \varepsilon^{-\chi}$ and weak hopping bias $b 
\varepsilon^{\kappa}$ in the limit $\varepsilon \to 0$ we develop a 
mesoscale MCT that allows for studying the crossover at $\kappa=1/2$ 
and $\chi=2$ from Kardar-Parisi-Zhang (KPZ) to Edwards-Wilkinson 
(EW) universality. The dynamical structure function is shown to satisfy
for all $\kappa$ an integral equation that is independent of 
the microscopic model parameters and has a solution that yields a 
scale-invariant function with the KPZ dynamical exponent $z=3/2$
at scale $\chi=3/2+\kappa$ for $0\leq\kappa<1/2$ and for $\chi=2$
the exact Gaussian EW solution with $z=2$ for $\kappa>1/2$. 
At the crossover point it is a function of both scaling variables which 
converges at macroscopic scale to the conventional MCT approximation 
of KPZ universality for $\kappa<1/2$. This fluctuation pattern 
confirms long-standing conjectures for $\kappa \leq 1/2$
and is in agreement with mathematically rigorous results 
for $\kappa>1/2$ despite the
numerous uncontrolled approximations on which MCT is based.
\end{abstract}

\section{Introduction}
\label{sec:1}

The asymmetric simple exclusion process (ASEP), introduced into the 
mathematical literature in the seminal paper by Spitzer \cite{Spit70}, 
is arguably the conceptually most important stochastic interacting 
particle system \cite{Derr98,Ligg99,Kipn99,Schu01}. Informally this 
continuous-time Markov process can be described as follows. Each site 
$k$ of the integer lattice is occupied by at most one particle (exclusion) 
so that the state of the particle system at any time $\tau$ is 
represented by occupation numbers $\eta(k)\in\{0,1\}$ for $k\in\Z$. 
All particles have an associated alarm clock. When the clock of a 
particle on site $k\in\Z$ rings, which happens independently of all other 
clocks after an exponential random time of mean 1, the particle 
attempts to jump to site $k+1$ with probability $(1+B)/2$ and to site 
$k-1$ with probability $(1-B)/2$ where $B$ is the bias that defines 
the strength of the asymmetry. If the target site $k\pm1$ is occupied 
the jump attempt fails. The process is called symmetric simple 
exclusion process (SSEP) for $B=0$ and totally asymmetric simple 
exclusion process (TASEP) for maximal bias $|B|=1$. The process 
considered on large space scales $k=x\varepsilon^{-1}$, large time 
scales $\tau=t\varepsilon^{-\chi}$ and weak bias $B=b 
\varepsilon^{\kappa}$ is called weakly asymmetric simple exclusion 
process (WASEP) for $\kappa >0$. 

Over the last five decades the ASEP has become a source of deep 
insights into generic behaviour of non-equilibrium interacting particle 
systems in one dimension and is thus sometimes dubbed the Ising 
model of non-equilibrium physics. The Markov generator is closely 
related to the quantum Hamiltonian of the spin-1/2 Heisenberg chain 
\cite{Alex78,Gwa92} which is completely integrable in the Yang-Baxter 
sense \cite{Baxt82,Gaud14}. This link allows for using the algebraic 
machinery involving non-Abelian symmetries, matrix product 
approaches and Bethe ansatz in its various guises to obtain exact 
results particularly on the microscopic properties of the model 
\cite{Derr98,Schu01} and has given prominence to the field of 
integrable probability \cite{Boro16}. 

Complementary probabilistic methods have led 
to a detailed and often mathematically rigorous understanding of the 
emergence of large-scale behaviour \cite{Spoh91,Ligg99,Kipn99,Bert15}. 
The coarse-grained particle density of the SSEP evolves under 
scaling of space and time with $\chi=2$ according to the diffusion 
equation and also the fluctuations around this deterministic limit are 
diffusive with the dynamical exponent $z=2$ of the Edwards-Wilkinson 
(EW) universality class \cite{Edwa82}. The coarse-grained density of 
the ASEP, on the other hand, evolves under Eulerian scaling 
($\chi=1$) according to the Burgers equation which exhibits shocks.
Density fluctuations on the finer scale $\chi=3/2$ belong to the 
celebrated superdiffusive Kardar-Parisi-Zhang (KPZ) universality class 
with dynamical exponent $z=3/2$ \cite{Kard86,Halp15}. Over time, a 
close relationship to directed polymers and growth models has 
revealed deep links between the ASEP, random matrix theory and the 
Kardar-Parisi-Zhang equation \cite{Krie10,Corw12,Spoh17,Mate21}. 

The WASEP with $\kappa=1$ conditioned on carrying an atypically low 
current exhibits a dynamical phase transition from a shock-like 
travelling wave behaviour to the flat density profile characteristic for 
typical currents \cite{Bodi05,Beli13}, while for atypically high current 
in the ASEP one finds a flat profile with algebraically decaying 
stationary density correlations and dynamical exponent $z=1$ 
\cite{Spoh99,Popk11,Jack15a} that have their origin in conformal 
invariance \cite{Kare17} that describes equilibrium critical 
phenomena in two-dimensions \cite{Card10}, thus bringing in view 
yet another seemingly unrelated field of research. 

In the quantum domain the matrix product ansatz, originally developed
for the ASEP \cite{Derr93a}, was generalized to the open Heisenberg 
spin chain \cite{Pros11,Pros15} and has revealed intriguing stationary 
states with density profiles \cite{Pros11,Kare13b} and long-range 
correlations \cite{Pros11,Buca16} that are reminiscent of the open 
SSEP 
\cite{Spoh83,Derr02,Derr07,Bert15}. Developping the Bethe ansatz 
further has led to generalized hydrodynamics \cite{Cast16,Bert16} 
which captures the large-scale temporal evolution of the quantum 
Heisenberg chain and related integrable many-body quantum system on 
Eulerian scale by taking into account the presence of infinitely many 
conservation laws \cite{Doyo20,Essl22}. Interestingly, also in the 
quantum setting fluctuations exhibit scaling behaviour known 
from the classical KPZ universality class \cite{Ljub19,DeNa19,Ye22}.

Remarkably, despite its simplicity the one-dimensional ASEP and its 
specializations SSEP and TASEP are important models also from an 
experimental perspective. The SSEP is a model for interface 
fluctuations in the universality class of the Edwards-Wilkinson equation 
with numerous experimental realizations \cite{Edwa82,Krug97}. It also 
serves as a model for reptation dynamics of entangled polymers 
\cite{Doi85} and has been shown \cite{Schu99} to explain 
experimental data on the tube-like motion of a single polymer-chain 
\cite{Perk94}. A single tagged particle in the SSEP has a subdiffusive 
mean square displacement \cite{Arra83,vanB83} which has been 
observed in single-file diffusion in zeolites \cite{Kukl96} and colloidal 
systems \cite{Wei00}. In the presence of pair annihilation the 
generator of the process can be expressed in terms of free fermions 
\cite{Lush87,Gryn95,Schu95,benA98} and thus allows for a very 
detailed rigorous study of diffusion-limited annihilation characterized in 
one dimension by an anomalous algebraic decay of the particle density 
which had been discovered earlier in exciton dynamics on polymer 
chains \cite{Shan82,Vard82,Priv97}.

The ASEP exhibits shocks both microscopically and on macroscopic 
Eulerian scale \cite{Reza91,Ferr91,Derr93b,Kreb03} with diffusive 
fluctuations of the shock position \cite{Ferr94a,Beli02} that clarify 
\cite{Schu97a} an apparent slowing down of ribosomes while moving 
on a mRNA template during protein synthesis, a fundamental biological 
process for which the ASEP was originally conceived as a qualitative 
model \cite{MacD68}. Perhaps the most spectacular experimental 
realization of ASEP behaviour is a series of beautiful experiments on 
turbulent liquid crystal systems by Takeuchi and Sano where interface 
fluctuations follow with remarkable accuracy the universal behaviour 
expected from the KPZ universality class \cite{Take10,Take11}. 
Predictions from generalized hydrodynamics for integrable quantum 
systems have been tested in -- by now -- numerous experiments with 
cold atoms \cite{Vidm15,Lang15,Kerr23}. A novel feature of the 
Heisenberg quantum chain that seems to have no analogue in the 
classical ASEP, but has a mathematical structure in common with 
shock measures, is the appearance of spin helix states, originally 
explored theoretically in \cite{Popk17,Popk21} and recently discovered 
in experiments on anisotropic Heisenberg chains \cite{Jeps22}.

This brief survey of exact, rigorous and experimental literature 
published from 1968 until today can only hint at the reasons why the 
ASEP continues to fascinate. Indeed, it might come as a surprise that 
there are still open questions related to the ASEP and 
its relationship to KPZ universality. From an experimental viewpoint
that most pressing one is that despite the mathematically expected 
robustness of the KPZ universality class the corresponding asymptotic
behaviour of experimentally accessible quantities is rarely seen in 
classical systems. The examples given above (and some more) are
notable exceptions. This raises the question of corrections to scaling.
Recent numerical evidence suggests that diffusive corrections are 
important \cite{Schm21}, but otherwise this question is essentially 
completely open. 

A second major open problem closely related to the ASEP is the 
rigorous derivation of dynamical universality classes in processes with 
more than one conservation law. This issue has been addressed 
systematically for two conservation laws with the mode coupling 
approximation \cite{Spoh14,Popk15a,Spoh15} which revealed the 
existence of new universality classes beyond the ubiquitous diffusion 
and KPZ. Subsequent work on an arbitrary number of conservation 
laws showed that all these universality classes are the members of a 
discrete infinite family \cite{Popk15b}. This series of dynamical
universality classes was dubbed the Fibonacci family because the 
dynamical exponents are all Kepler ratios of neighbouring Fibonacci 
numbers $F_i$. The series begins with Edwards-Wilkinson (EW)
($z=2=F_3/F_2$), is then followed by (a) KPZ, (b) a possibly new 
universality class related to KPZ, and (c) a third universality class 
named after L\'evy, all with $z=3/2=F_4/F_3$, then the heat 
mode exponent $z=5/3=F_5/F_4$ \cite{vanB12}, and then the 
remaining series $z_i=F_{i+1}/F_{i}$ up to the limit value 
$z_{\infty}=\varphi:=(1+\sqrt{5})/2$ which is the famous golden 
mean. 

All of these mode coupling studies consider the large-scale limit 
$x\to\infty$, $t\to\infty$ for strong asymmetry $\varepsilon=0$. 
An open problem is the question is the extent to which these
predictions are correct. Numerical Monte-Carlo studies confirm the 
predicted universal scaling functions \cite{Popk15b}, but not 
generally the non-universal scale factors. It therefore remarkable
that in one particular case rigorous work on a chain of oscillators
\cite{Bern12,Bern16} fully confirms the mode coupling prediction.
In a related series of rigorous work on two conservation laws
with weak asymmetry the KPZ and L\'evy universality classes with
$z=3/2$ as well as the  EW class with $z=2$ were found 
\cite{Bern14,Bern18,Ahme22}, but the present state of art of
the MCT analysis for strong asymmetry is not applicable and can 
therefore not be tested against these results.

This leads to a third and more subtle open issue which is the strong 
universality conjecture and concerns the exact relationship between 
the KPZ fixed point (characterizing the KPZ universality class) and the 
WASEP which exhibits a transition from KPZ to EW behaviour at a 
critical scale of a weak bias \cite{Gonc12} made more precise below. 
It is the purpose of this paper to extend the mode coupling 
approximation to the WASEP to address this issue from the mode 
coupling perspective. The main new contributions of the present work 
are to point out the existence on a mesoscopic scale of generalized 
mode coupling solutions applicable to the weakly asymmetric case 
(and then by inference to any system with one conservation law that is 
governed by nonlinear fluctuating hydrodynamics) and to elucidate 
under which conditions on the microscopic parameters these solutions 
are scale invariant.

To put these results into perspective we first continue the review of 
the ASEP, but with a focus on weak asymmetry (Sec. 2). In Sec. 3
the conventional mode coupling approximation is introduced and the 
underlying assumptions are made explicit. The mesoscopic MCT is 
developed in Sec. 4 and the asymptotic solution for the weakly 
asymmetric case is proved and commented upon. A brief summary 
and some final remarks are presented in Sec. 5.

\section{The (W)ASEP and its relation to KPZ and EW universality}

For any bias $B$ the unique invariant measure of the ASEP that is
also translation invariant is the Bernoulli product measure with 
density $\rho$ and static compressibility $K(\rho) = \rho(1-\rho)$ 
\cite{Ligg99}. For $B\neq 0$ this invariant measure is not reversible 
so that the ASEP defines a particle system that is perpetually out of 
thermal equilibrium with stationary particle current $j(\rho) = B 
\rho(1-\rho)$. The derivatives
\begin{equation}
c(\rho) := \frac{\rmd}{\rmd \rho} j(\rho) = B (1-2\rho), \quad
\check{G}(\rho) := \frac{\rmd^2}{\rmd \rho^2} j(\rho) = -2 B
\label{ccheckGdef}
\end{equation}
are the collective velocity $c(\rho)$ which is the center-of-mass 
velocity of the density fluctuations \cite{Schu01} and the coupling 
strength $\check{G}(\rho)$ that is a measure of the strength of the 
nonlinearity of the current as a function of the particle density 
$\rho$. Below we drop
the dependence of functions on $\rho$ to avoid excessive notation.

Consider the centered occupation numbers $\bar{\eta}(k) := 
\eta(k) - \rho$ and expectations w.r.t. the Bernoulli product measure 
with density $\rho$. The main quantity of interest in the present 
context is the large-scale behaviour of the stationary covariance
\begin{equation}
C(k,\tau) := \exval{\bar{\eta}(k,\tau)\bar{\eta}(0,0)}, \quad k\in\Z
\label{Covktaudef}
\end{equation}
and its Galilei-shifted Fourier transform
\begin{equation}
S(q,\tau) = \rme^{i c \sin{(q)}\tau} \frac{1}{\sqrt{2\pi}} 
\sum_{k\in\Z} \rme^{-iq k} C(k,\tau), \quad q \in (-\pi,\pi] 
\label{DSFqtaudef}
\end{equation}
which is called the dynamical structure function (DSF). 
Since $\eta(k)\in\{0,1\}$ and $\rho = \exval{\eta(k)} \in [0,1]$ the 
covariance $C(k,\tau)$ is trivially uniformly bounded from above 
and below by $\pm1$. Due to the Lieb-Robinson bound \cite{Lieb72} 
the covariance decays quickly at distances far from the center-of-mass 
position $k^\ast = [ct]$. Thus $S(q,\tau)$ is well-defined and has 
``nice'' properties for all $q\in\R$ and $\tau\in\R_0^+$. 

The collective velocity $c$ can be removed by a Galilei 
transformation, i.e., by studying the properties of the process in a 
shifted coordinate frame moving with this velocity.
Reflection symmetry then yields
\begin{equation}
S(q,\tau) = S(-q,\tau), \quad c = 0.
\label{refsym}
\end{equation}
For $\tau=0$
the Bernoulli product measure also yields
\begin{equation}
C(k,0) = K \delta_{k,0}, \quad 
S(q,0) = S(0,\tau) = \frac{K}{\sqrt{2\pi}}.
\label{Covtau0}
\end{equation}
Hence it is sufficient to restrict the discussion to the domain
$q,\tau\in\R^+$ of strictly positive momentum and time.

We study the large scale behaviour of the dynamical structure function 
by taking the limit $\varepsilon\to 0$ for $q = \varepsilon p$ and 
$\tau = \varepsilon^{-\chi} t$ with fixed strictly positive $p$ and $t$. 
The parameter $\chi>0$, which we call observation scale, quantifies 
how much one speeds up the microscopic time. The hopping bias $B$ 
of the ASEP comes in by taking $B=b \varepsilon^{\kappa}$.
We call the bias strong for $\kappa = 0$ and weak for $\kappa > 0$, 
with the qualification ``moderately weak'' for the range 
$0 < \kappa < 1/2$. Universality means that at some critical
observation scale $\chi^\ast$ the limiting dynamical structure 
function 
\begin{equation}
S^\ast(p,t) := \lim_{\varepsilon \to 0} 
S(\varepsilon p,\varepsilon^{-\chi} t)
\end{equation}
is such that the ratio $S^\ast(p,t)/S^\ast(0,0)$ is a non-trivial function 
only of a scaling variable $u=l p t^{1/z}$ with dynamical exponent 
$z$ and non-universal scale parameter $l$. The microscopic details
of the model, i.e., of the density $\rho$ and the bias parameters 
$b$ and $\kappa$ enter only $l$, in a
parameter submanifold that is not only a point in the three-dimensional
space of the microscopic parameters $\rho,b,\kappa$.

For strong asymmetry the asymptotic dynamical structure function of 
the KPZ universality class (to which the ASEP fluctuations belong, see 
above) was first studied using mode coupling approximation 
\cite{vanB85} and shown at observation scale $\chi=3/2$ to be a 
scaling function with dynamical exponent $z=3/2$. A numerical 
evaluation of the mode coupling solution \cite{Frey96,Cola01} 
shows a clear difference to the Gaussian that one obtains for the SSEP 
at observation scale $\chi=2$ and which characterizes the 
diffusive EW universality class with dynamical exponent $z=2$. The 
exact asymptotic dynamical structure function of the ASEP remained 
elusive for a long time, but was eventually derived 
in a remarkable analysis of the polynuclear growth model 
\cite{Prae04,FerrPL06}. Its universal scaling form 
is given by the Fourier transform of the
Pr\"ahofer-Spohn scaling function $f_{PS}(\cdot)$.

For weak asymmetry it has been known for a long time that for 
$\kappa =1$ and on diffusive observation scale $\chi=2$ the 
limiting density fluctuations of the WASEP converge to a generalized 
Ornstein-Uhlenbeck process \cite{DeMa89,Ditt91} with a drift 
term proportional to $v=b (1-2\rho)$ which can be removed by 
a Galilei transformation $x \to x+vt$. The corresponding universal
dynamical structure function is then the Gaussian with diffusion 
constant $D=1/2$. This result was later extended to the range 
$1/2<\kappa \leq 1$ \cite{Gonc12} and shows that independently
of $\kappa$ (in that range) the density fluctuations belong to the 
Edwards-Wilkinson universality class 
\cite{Edwa82} with dynamical exponent $z=2$ like the SSEP. 

For $\kappa=1/2$ one obtains after a Hopf-Cole transformation 
\cite{Gart88} an exponential process for which the limiting 
fluctuations are given by the stochastic heat equation \cite{Bert97}. 
Looking directly at the fluctuation field (i.e., without Hopf-Cole 
transformation) it was proved \cite{Gonc12} that the limiting  
fluctuation field $\phi(x,t)$ satisfies at observation
scale $\chi=2$  (weakly) the 
stochastic Burgers equation (SBE) which we write loosely in the 
intuitive form 
\begin{equation}
\partial_t \phi = D \partial_x^2 \phi + 2 \check{G}^2 (\partial_x \phi)^2
+  \partial_x \xi
\label{SBE}
\end{equation}
with Gaussian space-time white noise $\xi$.

The SBE, related to the KPZ equation \cite{Kard86} for the height
function $h(x,t)$ via $\phi(x,t)=\partial_x h(x,t)$, is the generic 
evolution equation for systems with one conservation law that are 
governed by nonlinear fluctuating hydrodynamics \cite{Spoh91}. The 
fundamental underlying assumption is the validity of a 
Boltzmann-Gibbs principle which assumes that the relaxation of the 
dynamical structure function is governed by the long-wave length 
Fourier modes of the conserved field. It should be noted that the 
fluctuation field $\phi(x,t)$ appearing in the SBE is {\it not} scale 
invariant under the transformation $x\to\lambda x$, 
$t\to\lambda^{3/2}t$ as one might naively expect from KPZ 
universality and which is true for the Markov field that defines the 
KPZ fixed point and is characterized by its transition probabilities 
\cite{Mate21}. However, this scaling is expected to send $\phi(x,t)$ 
to the true universal fixed point. While universality is by now 
well-established \cite{Quas23} it is not known how this is reflected 
in the properties of the dynamical structure function for $\kappa=1/2$.

For $b\to \infty$ and likewise in the range $0 < \kappa <1/2$ 
for fixed $b$ one expects the same universal KPZ fluctuations as for 
the strongly asymmetric case $\kappa=0$ \cite{Amir11}. 
The exact spectral gap of the (W)ASEP computed from Bethe ansatz
\cite{Kim95} suggests that these KPZ fluctuations appear on observation scale $\chi^\ast = 3/2+\kappa$, see also an argument for
a system with two conservation laws that has been deduced 
\cite{Cannarx23} from the universal KPZ scaling discussed in 
\cite{Quas23}. 
A rigorous proof of these conjectures is lacking, however.
\footnote{An old result from coupling theory for the one-dimensional ASEP \cite{vanB85} also suggests $\chi^\ast = 3/2+\kappa$
but does not directly predict the range $0 < \kappa <1/2$
nor the behaviour at $\kappa = 1/2$
unlike the mesoscale mode coupling theory developed below.}

\section{Mode coupling theory for the ASEP}

To discuss the time evolution of the (W)ASEP it is convenient to
split the generator $\mathcal{L}^{ASEP}$ into a part $\mathcal{L}^{ASEP}_0$ that acts 
linearly on the occupation numbers $\eta(k)$ and a nonlinear
part $\mathcal{L}^{ASEP}_1$ that would be absent in a non-interacting particle system, i.e.,  
\begin{equation}
\mathcal{L}^{ASEP} = \mathcal{L}^{ASEP}_0 +
\mathcal{L}^{ASEP}_1.
\label{geneta}
\end{equation}
With the symmetric discrete gradient 
$\nabla f(k) := [f(k+1)-f(k-1)]/2$, the asymmetric discrete gradients
$\nabla^+ f(k) := f(k+1)-f(k)$, $\nabla^- f(k) := f(k)-f(k-1)$, the
discrete Laplacian $\Delta f(k) := f(k+1)+f(k-1) - 2 f(k)$ and
the fixed diffusion coefficient $D=1/2$ the action of the generator
on the centered occupation numbers is thus given by \cite{Ligg99}
\begin{equation}
\mathcal{L}^{ASEP}_0 \bar{\eta}(k) = D \Delta \bar{\eta}(k) 
- c \nabla \bar{\eta}(k) 
\label{genASEPlin}
\end{equation}
and
\begin{equation}
\mathcal{L}^{ASEP}_1 \bar{\eta}(k) = \check{G} \bar{\eta}(k) 
\nabla \bar{\eta}(k) = \check{G} \nabla^- \bar{\eta}(k) \bar{\eta}(k+1)
\label{genASEPnl}
\end{equation}
with the collective velocity $c$ and coupling strength $\check{G}$
defined in (\ref{ccheckGdef}).

The first approximation is to replace the binary random number 
$\bar{\eta}(k)$ by a real-valued random variable $\phi(k)$ and 
interpret the action (\ref{genASEPlin}) and (\ref{genASEPnl}) on 
$\bar{\eta}(k)$ as representing the action of a generator
$\mathcal{L} = \mathcal{L}_0 + \mathcal{L}_1$ of some discrete 
nonlinear stochastic evolution equation for $\phi(k)$ with an explicit 
and simple invariant measure. This is achieved with the choice
\begin{eqnarray}
\mathcal{L}_0 & = & \sum_{j\in\Z} \left[- v \nabla \phi(j) + 
D \Delta \left(\phi(j) - \partial_{\phi(j)}\right)\right] 
\partial_{\phi(j)} \\
\mathcal{L}_1 & = & \check{G} \sum_{j\in\Z} \left[\left(\nabla \phi(j) \right) \left(\phi(j) + 
\frac{1}{3} \Delta \phi(j)\right)\right] 
\partial_{\phi(j)} 
\end{eqnarray}
which has the desired property
\begin{equation}
\mathcal{L}_0 \phi(k) = - c \nabla \phi(k) + 
D \Delta \phi(k)
\label{gendSBElin}
\end{equation}
for the linear part mimicking (\ref{genASEPlin}) of the linear
evolution in the ASEP.
 The nonlinear part (\ref{genASEPnl}) is approximated as
\begin{equation}
\mathcal{L}_1 \phi(k) = \check{G} \frac{\phi(k-1) + \phi(k) + 
\phi(k+1)}{3} \nabla \phi(k) = \check{G} \nabla^- F(k)
\label{gendSBEnl}
\end{equation}
with the average $[\phi(k-1) + \phi(k) + \phi(k+1)]/3 = 
(1+ \Delta/3) \phi(k) $ rather
than just $\phi(k)$ in (\ref{genASEPnl})
and with
\begin{equation}
F(k) = \frac{\phi^2(k) + \phi(k) \phi(k+1)
+ \phi^2(k+1)}{3} =
\phi(k) \phi(k+1) + \frac{1}{3} \left(\nabla^- \phi(k)\right)^2.
\label{dSBEnl}
\end{equation}
This discretization ensures that the invariant measure of 
$\mathcal{L}$ generator is a product of mean-zero normal 
distributions. As noted in \cite{Spoh14} this is the generator 
associated with a discretized SBE (\ref{SBE}) with Gaussian 
white noise.

With this SBE approximation the space-time covariance 
(\ref{Covktaudef}) satisfies the evolution equation
\begin{eqnarray}
\frac{\rmd}{\rmd \tau} C(k,\tau) 
& = & \exval{\phi(0)
\left[
\rme^{\mathcal{L}_{0} \tau} + 
\int_{0}^{\tau} \rmd \sigma \, \rme^{\mathcal{L}_{0} (\tau - \sigma)} 
\mathcal{L}_{1} \rme^{\mathcal{L} \sigma}\right] (\mathcal{L}_{0}
+\mathcal{L}_{1}) \phi(k)} \\
& = & \exval{\phi(0)
\rme^{\mathcal{L} \tau} \mathcal{L}_{0} \phi(k)} 
+ \exval{\phi(0)\rme^{\mathcal{L}_{0} \tau} \mathcal{L}_{1} \phi(k)}\nonumber \\
& & + \int_{0}^{\tau} \rmd \sigma \, \exval{\phi(0)
\rme^{\mathcal{L}_{0} (\tau - \sigma)} \mathcal{L}_{1} 
\rme^{\mathcal{L}\sigma}\mathcal{L}_{1} \phi(k)}
\end{eqnarray}
where the expectation is taken w.r.t. to the Gaussian product measure.
However, since the action of $\mathcal{L}_{1} \phi(k)$ is bilinear
in the fields $\phi(\cdot)$ and all odd moments vanish in the expectation, the second
term in r.h.s. of the second line vanishes. Next, notice that
the adjoint of $\mathcal{L}_{1}$ w.r.t. to the flat volume measure 
has the simple property $\mathcal{L}^T_{1} = - \mathcal{L}_{1}$
and that $\mathcal{L}_{0}$ is linear. By using translation
invariance and the second equality in (\ref{gendSBEnl})
the evolution equation reduces to 
\begin{equation}
\frac{\rmd}{\rmd \tau} C(k,\tau) 
= \left(- c \nabla + 
D \Delta\right) C(k,\tau)
+  \check{G}^2 \Delta \sum_{l\in\Z} \int_{0}^{\tau} \rmd \sigma \,  
C_0(l-k,\tau-\sigma) \Omega(l,0;\sigma)
\label{apprx1}
\end{equation}
where 
\begin{equation}
C_0(k-l,\tau-\sigma) = \exval{\phi(0)
\rme^{\mathcal{L}_0 (\tau-\sigma)} \phi(l-k)}
\label{dSBElinprop}
\end{equation}
is the linearized covariance given by integrating 
(\ref{gendSBElin}) and  
\begin{equation}
\Omega(l,m;\sigma) := \exval{F(l)
\rme^{\mathcal{L}\sigma} F(m)}.
\label{Omegadef}
\end{equation}
with $F(\cdot)$ defined in (\ref{dSBEnl}). This is an exact evolution
equation for the covariance of the discretized SBE, but not directly
tractable due the appearance of the four-point function (\ref{Omegadef}).
To proceed we follow
\cite{vanB85,Spoh14} and introduce the
various approximations that lead to the mode coupling equation
for the space-time covariance (\ref{Covktaudef}).

\paragraph{\underline{MCT Approximation 1 (Gaussian random vector):}}

The idea to handle (\ref{Omegadef}) is to treat the random numbers
$\phi(k,\tau)$ as components of a mean-zero Gaussian
random vectors which allows for expressing expectations of a 
product of an even number of components by the covariances
\begin{equation}
\exval{\phi(k_1,\tau_1)\dots \phi(k_m,\tau_m)} = \sum_{p \in P^{(2)}_n}
\prod_{(i,j)\in p} \exval{\phi(k_i,\tau_i)\phi(k_j,\tau_j)}
\label{Isserlis} 
\end{equation}
where $P^{(2)}_n$ is the set of all the distinct pairings $(i,j)$ 
of the set $\{1,\ldots,m\}$. Expectations of an odd number of
components are zero. 

It is easy to see that in the subsequent computations the gradient
part in (\ref{dSBEnl}) generates only subleading contributions
to the asymptotic behaviour and can therefore be neglected.
Using translation invariance and stationarity and neglecting further subleading gradient terms then yields
\begin{equation}
\Omega(l,0;\sigma) 
\approx 2 \exval{\phi(l,\sigma) \phi(0,0)}^2 .
\end{equation}
and therefore
\begin{equation}
\frac{\rmd}{\rmd \tau} C(k,\tau) 
= \left(- c \nabla + 
D \Delta\right) C(k,\tau)
+  2 \check{G}^2 \Delta \sum_{l\in\Z} \int_{0}^{\tau} \rmd \sigma \, C_0(l-k,\tau-\sigma) 
C^2(l,\sigma)
\label{apprx2}
\end{equation}
involving both the full covariance and the linear
evolution of the  covariance.

\paragraph{\underline{MCT Approximation 2 (Closure):}}

To close the equation (\ref{apprx2}) the linear evolution 
(\ref{dSBElinprop}) is replaced {\it ad hoc} by the full covariance.
This yields the mode coupling equation
\begin{equation}
\frac{\rmd}{\rmd \tau} C(k,\tau) 
= \left(- c \nabla + 
D \Delta\right) C(k,\tau)
+  2 \check{G}^2 \Delta \sum_{l\in\Z} \int_{0}^{\tau} \rmd \sigma \,  C(k-l,\tau-\sigma) 
C^2(l,\sigma) 
\label{apprx3}
\end{equation}
which is a discrete version of the mode coupling equation
originally obtained in \cite{vanB85}.

We recall that the discrete gradient term that remains in 
(\ref{apprx3}) can be removed by a Galilei transformation. 
For the dynamical structure function in this moving coordinate
frame Fourier transformation
yields with the eigenvalue $\mathsf{e}(q) := 2 [1 - \cos{(q)}]$
of the discrete Laplacian
\begin{eqnarray}
\frac{\rmd}{\rmd \tau} S(q,\tau) 
& = & 
- D \mathsf{e}(q) S(q,\tau) 
\nonumber \\ & & 
- \mathsf{e}(q) 2 \check{G}^2
\int_{0}^{\tau} \rmd \sigma \,  S(q,\tau-\sigma)  
\int_{-\pi}^{\pi} \rmd q' \, 
S(q',\sigma) S(q-q',\sigma).
\label{dMCT}
\end{eqnarray}
Since according to (\ref{Covtau0}) $S(q,0)=K/\sqrt{2\pi}$ 
a Laplace transformation
\begin{equation}
\tilde{S}(q,s) = \int_{0}^\infty \rmd \tau \, \rme^{-s \tau}
S(q,\tau)
\end{equation}
renders the mode coupling equation (\ref{dMCT}) in the equivalent
form
\begin{equation}
\frac{K}{\sqrt{2\pi}}
= \left[s
+ D \mathsf{e}(q) + 2 \check{G}^2 \mathsf{e}(q) \tilde{M}(q,s)
\right] \tilde{S}(q,s) 
\label{dMCTLFT}
\end{equation}
with the memory integral
\begin{equation}
\tilde{M}(q,s) = 
\int_{0}^{\infty} \rmd \tau \, \rme^{-s \tau} 
\int_{-\pi}^{\pi} \rmd q' \, 
S(q',\tau) S(q-q',\tau)
\label{memLFT}
\end{equation}
arising from the nonlinearity of the process. The equation is reflection 
symmetric and trivially satisfied for $q=0$ and hence consistent with 
the exact properties (\ref{refsym}) and (\ref{Covtau0}). This 
Fourier-Laplace representation of the mode coupling equation is the 
starting point for further developing the mode coupling analysis.

\section{Mesoscale mode coupling theory}

In the original paper \cite{vanB85} treating the ASEP ($\kappa=0$) 
as well as in the later work
for strongly asymmetric systems with more than one conservation 
law \cite{Spoh14,Popk15a,Popk15b} the discrete mode coupling 
approximation (\ref{apprx3}) and consequently the FT (\ref{dMCT})
are further approximated by taking $k\in\R$ and replacing the
discrete gradient and discrete Laplacian by the usual 
partial derivative $\partial_k$ and $\partial_k^2$ resp.
For the FT this means that $q\in\R$ and the integration over $q'$
extends over the whole of $\R$. The second step in these 
works is then to look for asymptotic scaling solutions for 
and $q\to 0$, $t\propto q^{-z}\to\infty$ with dynamical exponent $z$, 
corresponding to macroscopic space and time scales. Here we take 
an approach that is different in several respects.

\begin{itemize}
\item[1)] Non-scaling solutions: Rather than focussing on strong asymmetry and looking for asymptotic scaling behaviour small 
$q$ and some fixed $z$ we do not make any scaling assumption.

\item[2)] Weak asymmetry: The asymmetry exponent $\kappa$
may take any positive value, including the stronlgy asymmetric case $\kappa=0$.

\item[3)] Mesoscopic and macroscopic scale:  We choose $q 
= \varepsilon p$, $\tau = \varepsilon^{-\chi} t$ and take the limit 
$\varepsilon \to 0$ keeping $p\in\R^+$, $t\in\R^+$ fixed and with
$\chi\in\R^+$ taken as a variable.

\item[4)] Change of variables: Instead of the ``natural''
variables $p,t$ we analyse the mode coupling equation (\ref{dMCT}) 
in terms of judiciously chosen variables that depend on the microscopic 
parameters $D,b,\kappa$ of the WASEP.

\item[5)] No continuum approximation: The starting point for 
investigating solutions of the mode coupling equation is the 
discrete equation (\ref{dMCTLFT}).
\end{itemize}

The observation exponent $\chi$ in item 3
defines the space-time scale on which the (W)ASEP is investigated and 
which we call mesoscopic, as opposed to a macroscopic scale defined 
by a second limit, viz., $p\to \lambda p$ with $\lambda\to 0$ and time 
$t \to \lambda^{-a} t$ accelerated at some rate defined by the 
macroscale exponent $a$. The main points of the mesoscale approach, expressed in items 1 - 3, 
are that we neither assume scale invariance from the outset and nor 
restrain ourselves to the macroscopic scale and strong asymmetry. 
Instead we determine the parameter ranges 
for which scale invariant solutions appear (or not) on the mesoscopic 
scale for arbitrary $\kappa\geq0$. For scale-invariant dynamical structure functions the solutions 
are the same on the mesoscopic and the macroscopic scale, but they 
are different in the absence of scale invariance. The mesoscopic 
description allows for considering this case and for studying 
crossover between different dynamical universality classes. 
The technical item 4 facilitates studying the mesocopic limit.
Item 5 is mathematically immaterial, but helps keeping track of 
the physical interpretation of the various space-time scales 
(microscopic, mesoscopic and macroscopic) under consideration.
This distinction is made precise as follows.

\begin{definition}
\label{Def:mesomacroscale}
The mesoscopic DSF is the limit 
\begin{equation}
S^\ast(p,t) := \lim_{\varepsilon \to 0} 
S(\varepsilon p, \varepsilon^{-\chi} t), \quad
\tilde{S}^\ast(p,\omega) := \lim_{\varepsilon \to 0} \varepsilon^{\chi} 
\tilde{S}(\varepsilon p, \varepsilon^{\chi} \omega)
\label{Smesodef}
\end{equation}
of the microscopic DSF $S(\cdot, \cdot)$, $\tilde{S}(\cdot, \cdot)$ 
in the Fourier and Fourier-Laplace domain respectively.
The macroscopic DSF is the limit 
\begin{equation}
S^{\bullet}(p,t) := \lim_{\varepsilon \to 0} 
S^\ast(\varepsilon p, \varepsilon^{-a} t), \quad
\tilde{S}^{\bullet}(p,\omega) := \lim_{\varepsilon \to 0} \varepsilon^{a} 
\tilde{S}^\ast(\varepsilon p, \varepsilon^{a} \omega)
\label{Smacrodef}
\end{equation}
of the mesoscopic DSF.
\end{definition}
Whether or not the DSF is scale invariant or even trivial is to be determined as a function of the obervation scale parameters $\chi$ 
and $a$ rather than made property of an ansatz as in  \cite{Spoh14,Popk15a,Popk15b}.

\subsection{Change of variables}

We study the dynamical structure function 
\begin{equation}
S_{\varepsilon}(p,t) := S(\varepsilon p, \varepsilon^{-\chi} t)
\end{equation}
as a function of the variables
\begin{equation}
\zeta_1(p,t) := \varepsilon^{(\chi_1-\chi)/z_1} |p| (l_1 t)^{1/z_1}, \quad 
\zeta_2(p,t) := \varepsilon^{(\chi_2-\chi)//z_2} |p| (l_2 t)^{1/z_2}
\label{zetadef}
\end{equation}
with parameters $z_i,l_i,\chi_i$ that depend on the microscopic
parameters $b,D,\kappa$ in a way determined below.
This mapping with the inverse transformation
\begin{equation}
p(\zeta_1,\zeta_2)
= 
\left(\frac{\varepsilon^{\chi_1} l_1 \zeta_2^{z_2}}
{\varepsilon^{\chi_2} l_2 \zeta_1^{z_1}}\right)^{\frac{1}{z_2-z_1}} 
, \quad t(\zeta_1,\zeta_2) = \varepsilon^{\chi}
\left(
\frac{\zeta_1}{\varepsilon^{\chi_1}l_1}
\right)^{\frac{z_2}{z_2-z_1}}
\left(\frac{\varepsilon^{\chi_2}l_2}{\zeta_2} \right)^{\frac{z_1}{z_2-z_1}} .
\label{zetainv}
\end{equation}
is not bijective for $z_2=z_1$ which is therefore excluded
from consideration. Without loss of generality we assume 
$z_2>z_1$.

To express the microscopic dynamical structure function $S_{\varepsilon}(p,t)$
in terms of the variables $\zeta_1,\zeta_2$ 
we introduce the normalized 
microscopic dynamical structure function
$s_{\varepsilon}(\cdot,\cdot)$ by the relation
\begin{equation}
s_{\varepsilon}(\zeta_1, \zeta_2) := \frac{\sqrt{2\pi}}{K} S_{\varepsilon}(p(\zeta_1, \zeta_2),t(\zeta_1, \zeta_2))
\label{hsdef}
\end{equation}
with the constant $K/\sqrt{2\pi}$ derived in (\ref{Covtau0}).
In terms of the
scaling variables
\begin{equation}
u_i \equiv u_i(p,t) := (l_i |p|^{z_i} t)^{1/z_i} 
\label{scalvarFT}
\end{equation}
which are invariant under the scaling transformation $p\to \lambda p$,
$t\to \lambda^{z_i} t$ one has
$\zeta_i(p, t) = 
  \varepsilon^{(\chi_i-\chi)/z_i} u_i(p,t) $ so that
the microscopic DSF $S_{\varepsilon}(p,t)$ is expressed
in terms of the function $s_{\varepsilon}(\cdot,\cdot)$ by
\begin{equation}
S_{\varepsilon}(p,t) =
\frac{K}{\sqrt{2\pi}} s_{\varepsilon}
(\varepsilon^{(\chi_1-\chi)/z_1} u_1(p,t),
\varepsilon^{(\chi_2-\chi)/z_2} u_2(p,t)) .
\label{DSFepspt}
\end{equation}
We note the limits
\begin{equation}
s_{\varepsilon}(0,0) = \lim_{u\to 0} s_{\varepsilon}(ux,uy) = 1, \quad
\lim_{u\to\infty} s_{\varepsilon}(u,y) = 
\lim_{u\to\infty} s_{\varepsilon}(x, u) = 0 
\end{equation}
for fixed $x,y \in \R^+_0$ which arise from (\ref{Covtau0}) and
from the Lieb-Robinson bound respectively. 

To make the corresponding change of variables in the Laplace transform of 
the mode coupling equation we introduce the scaling variables
\begin{equation}
\tilde{u}_i \equiv \tilde{u}_i(p,\omega) := \frac{l_i |p|^{z_i}}{\omega} 
\label{scalvarFLT}
\end{equation}
with the inverse relations
\begin{equation}
p \equiv p(\tilde{u}_1,\tilde{u}_2)
=   \left(\frac{l_1 \tilde{u}_2}{l_2 \tilde{u}_1}\right)^{\frac{1}{z_2-z_1}} , \quad
\omega \equiv \omega(\tilde{u}_1,\tilde{u}_2) = \left(
\frac{l_1}{\tilde{u}_1}
\right)^{\frac{z_2}{z_2-z_1}}
\left(\frac{\tilde{u}_2}{l_2} \right)^{\frac{z_1}{z_2-z_1}} 
\label{scalvarFLTinv}
\end{equation}
and the auxiliary functions
\begin{eqnarray}
\tilde{s}_{\varepsilon}(x,y) & := &
\int_{0}^\infty \rmd \tau \, \rme^{- \tau}
s_{\varepsilon}((\varepsilon^{\chi_1-\chi} x \tau)^{1/z_1},  (\varepsilon^{\chi_2-\chi} y \tau)^{1/z_2}) 
\label{tspmeps} \\
\tilde{m}_{\varepsilon}(x,y) 
& := & \int_{0}^{\infty} \rmd \tau \, \rme^{-\tau} 
\int_{-\varepsilon^{-1}B(x,y)}^{\varepsilon^{-1}B(x,y)} 
\frac{\rmd q}{2\pi} \, 
s_{\varepsilon}(q (\varepsilon^{\chi_1-\chi} x \tau)^{1/z_1},
q (\varepsilon^{\chi_2-\chi} y \tau)^{1/z_2})
\nonumber \\ & & 
\times s_{\varepsilon}((1-q) (\varepsilon^{\chi_1-\chi} x \tau)^{1/z_1},
(1-q) (\varepsilon^{\chi_2-\chi} y \tau)^{1/z_2})
\label{tmpmeps}
\end{eqnarray}
with the integration range in the second line defined by
\begin{equation} 
B(x,y) := \pi \left(\frac{l_1 y}{l_2 x}\right)^{\frac{1}{z_1-z_2}} .
\end{equation} 

Laplace transformation then yields
\begin{eqnarray}
\tilde{S}_{\varepsilon}(p,\omega) & := &
\tilde{S}(\varepsilon p, \varepsilon^{\chi} \omega) \, = \, \frac{K}{\sqrt{2\pi}} \varepsilon^{-\chi} \frac{1}{\omega} 
\tilde{s}_{\varepsilon}(\tilde{u}_1(p,\omega),\tilde{u}_2(p,\omega))
\label{tDSFeps} \\
\tilde{M}_{\varepsilon}(p,\omega) & := &
\tilde{M}(\varepsilon p, \varepsilon^{\chi} \omega) \, = \, K^2 \varepsilon^{1-\chi} \frac{|p|}{\omega}
\tilde{m}_{\varepsilon}(\tilde{u}_1(p,\omega),\tilde{u}_2(p,\omega)).
\label{tmemeps}
\end{eqnarray}
in terms of the scaling variables (\ref{scalvarFLT}).
Therefore with
\begin{equation}
g := \check{G} K \varepsilon^{-\kappa}
= -2 b \rho(1-\rho)
\end{equation}
the discrete mode coupling
equation (\ref{dMCTLFT}) can be written for any fixed $\varepsilon$ in terms of the 
variables $\tilde{u}_1, \tilde{u}_2$ and the function $s_\varepsilon(\cdot,\cdot)$ as
\begin{equation}
\tilde{s}_{\varepsilon}(\tilde{u}_1,\tilde{u}_2)
= \left[1
+ \frac{\mathsf{e}(\varepsilon |p|)}{\varepsilon^2 |p|^2}
\left(\varepsilon^{2-\chi} \frac{|p|^2}{\omega} D  
+ 2 g^2  \varepsilon^{3+2\kappa-2\chi} \frac{|p|^3}{\omega^2}
\tilde{m}_{\varepsilon}(\tilde{u}_1,\tilde{u}_2)
\right)\right]^{-1}
\label{dMCTLFT3}
\end{equation}
where, using the inverse transformation (\ref{scalvarFLTinv}),
\begin{equation}
\frac{|p|^z}{\omega}
= 
\left(\frac{\tilde{u}_1}{l_1}
\right)^{\frac{z_2-z}{z_2-z_1}}
\left(\frac{\tilde{u}_2}{l_2} \right)^{\frac{z-z_1}{z_2-z_1}}.
\label{pzo}
\end{equation}
The mesoscopic description of the WASEP outlined in the next section
is obtained by taking the limit $\varepsilon\to 0$.

\subsection{Mesoscale solution of the mode coupling equation}

To study solutions of the mode coupling equation (\ref{dMCTLFT3})
that provide information on the mesoscale DSF (\ref{Smesodef})
we first derive separately the asymptotic behaviour of the
Laplace transform and of the memory integral.

The mesoscale behaviour of (\ref{DSFepspt}) is given by the 
limit $\varepsilon\to 0$ of the auxiliary function (\ref{tspmeps}). 
To discuss this limit it is convenient to
introduce further auxiliary functions
\begin{eqnarray}
\tilde{s}^\ast(x,y) & := &
\int_{0}^\infty \rmd \tau \, \rme^{- \tau}
s ((x \tau)^{1/z_1}, (y \tau)^{1/z_2}) 
\label{tspm} \\
\check{s}^\ast(x,y) & := & \int_{0}^\infty \rmd \tau \, s ((x \tau)^{1/z_1}, (y \tau)^{1/z_2}) 
\label{cspm}
\end{eqnarray}
and the parameters
\begin{equation}
\mathring{\chi} := \min{(\chi_1,\chi_2)}, \quad 
\mathring{\theta} := \min{(\chi,\mathring{\chi})}.
\label{ringchidef}
\end{equation}

\begin{proposition}
\label{Prop:1}
For $\chi>0$
the limit
\begin{equation}
\tilde{s}^{\diamond}(\tilde{u}_1,\tilde{u}_2) 
:=  \lim_{\varepsilon \to 0} \varepsilon^{\mathring{\theta}-\chi} 
\tilde{s}_{\varepsilon}(\tilde{u}_1,\tilde{u}_2) 
\label{tsbulldef}
\end{equation}
is given by
\begin{equation}
\tilde{s}^{\diamond}(\tilde{u}_1,\tilde{u}_2)
= \left\{
\begin{array}{ll}
\begin{array}{ll}
1 &  \\
\end{array}  & \quad \chi < \mathring{\chi} \\[2mm]
\left\{
\begin{array}{ll}
\tilde{s}^\ast
(\tilde{u}_1,0) & \chi_1 < \chi_2\\
\tilde{s}^\ast
(\tilde{u}_1,\tilde{u}_2) & \chi_1 = \chi_2 \\
\tilde{s}^\ast
(0,\tilde{u}_2) & \chi_1 > \chi_2
\end{array} \right. & \quad \chi = \mathring{\chi}  \\[6mm]
\left\{
\begin{array}{ll}
\check{s}^\ast
(\tilde{u}_1,0) & \chi_1 < \chi_2 \\
\check{s}^\ast
(\tilde{u}_1,\tilde{u}_2) & \chi_1 = \chi_2 \\
\check{s}^\ast
(0,\tilde{u}_2) & \chi_1 > \chi_2 
\end{array} \right.  & \quad \chi > \mathring{\chi}.
\end{array} \right. 
\end{equation}
\end{proposition}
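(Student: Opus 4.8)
The plan is to track the two distinct sources of $\varepsilon$-dependence in the auxiliary function (\ref{tspmeps}): the explicit prefactors $\varepsilon^{(\chi_i-\chi)/z_i}$ sitting inside each argument of $s_\varepsilon$, and the convergence $s_\varepsilon\to s$ of the normalized DSF itself as $\varepsilon\to 0$. Writing the $i$-th argument as $\varepsilon^{(\chi_i-\chi)/z_i}(x_i\tau)^{1/z_i}$ (with $x_1=\tilde u_1$, $x_2=\tilde u_2$), the sign of $\chi_i-\chi$ decides its fate: for $\chi_i>\chi$ the argument collapses to $0$, for $\chi_i=\chi$ it stays fixed at $(x_i\tau)^{1/z_i}$, and for $\chi_i<\chi$ it diverges. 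The three regimes in the statement correspond exactly to comparing $\chi$ with $\mathring\chi=\min(\chi_1,\chi_2)$, and within each regime the sub-cases $\chi_1\lessgtr\chi_2$ record which slot survives.

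First I would dispatch the regime $\chi\le\mathring\chi$, where $\mathring\theta=\chi$ and the prefactor $\varepsilon^{\mathring\theta-\chi}=1$. For $\chi<\mathring\chi$ both arguments vanish, so by the normalization $s(0,0)=1$ and dominated convergence (the integrand is bounded by $\rme^{-\tau}$) the integral tends to $\int_0^\infty\rme^{-\tau}\,\rmd\tau=1$. For $\chi=\mathring\chi$ exactly the slot(s) with $\chi_i=\chi$ remain fixed while any slot with $\chi_i>\chi$ collapses to $0$; passing the limit inside the integral yields $\tilde s^\ast$ evaluated with a $0$ inserted in the collapsed slot, giving the three sub-cases $\tilde s^\ast(\tilde u_1,0)$, $\tilde s^\ast(\tilde u_1,\tilde u_2)$, $\tilde s^\ast(0,\tilde u_2)$.

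The substantive regime is $\chi>\mathring\chi$, where $\mathring\theta=\mathring\chi$ and the prefactor $\varepsilon^{\mathring\chi-\chi}$ diverges, so $\tilde s_\varepsilon$ must itself decay at a matching rate. Here I would rescale the integration variable by $\tau=\varepsilon^{\chi-\mathring\chi}\sigma$. This choice is engineered so that the fastest-diverging argument -- the slot with $\chi_i=\mathring\chi$ -- is renormalized to the fixed value $(x_i\sigma)^{1/z_i}$, while the Jacobian $\rmd\tau=\varepsilon^{\chi-\mathring\chi}\,\rmd\sigma$ cancels the divergent prefactor exactly. Simultaneously the exponential becomes $\rme^{-\varepsilon^{\chi-\mathring\chi}\sigma}\to 1$, which is precisely why the limit is governed by $\check s^\ast$ rather than $\tilde s^\ast$. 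The remaining slot, carrying exponent $(\chi_j-\mathring\chi)/z_j$ with $\chi_j>\mathring\chi$, still collapses to $0$ (unless $\chi_1=\chi_2$, in which case it too is renormalized to $(x_j\sigma)^{1/z_j}$), reproducing the three sub-cases of $\check s^\ast$.

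I expect the main obstacle to be justifying the interchange of limit and integral in this last regime, since the exponential regulator has been washed out and one is left with the bare integral $\check s^\ast(x,y)=\int_0^\infty s((x\sigma)^{1/z_1},(y\sigma)^{1/z_2})\,\rmd\sigma$. Its mere finiteness is not automatic and relies on the decay of $s$ along the ray $\sigma\mapsto((x\sigma)^{1/z_1},(y\sigma)^{1/z_2})$ furnished by the Lieb--Robinson-type bound $\lim_{u\to\infty}s(u,\cdot)=\lim_{u\to\infty}s(\cdot,u)=0$; to close the argument rigorously one needs a quantitative decay rate (or an integrable dominating function uniform in $\varepsilon$) guaranteeing both convergence of $\check s^\ast$ and the applicability of dominated convergence, together with the locally uniform convergence $s_\varepsilon\to s$.
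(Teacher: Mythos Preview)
Your argument is correct and matches the paper's proof in substance: the paper performs the same rescaling $\tau\to\varepsilon^{\chi-\mathring\theta}\tau$ in a single stroke (introducing an arbitrary substitution exponent $\mathring a$ and then fixing $\mathring a=\chi-\mathring\theta$), whereas you split into the cases $\chi\le\mathring\chi$ (no rescaling needed since $\mathring\theta=\chi$) and $\chi>\mathring\chi$ (your rescaling by $\varepsilon^{\chi-\mathring\chi}$). Your discussion of the dominated-convergence issue in the regime $\chi>\mathring\chi$ is in fact more explicit than the paper's, which simply asserts that ``limit and integration can be interchanged'' after the substitution.
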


\begin{proof}
This is proved by noting that
\begin{eqnarray}
\varepsilon^{-\mathring{a}} \tilde{s}_{\varepsilon}(x,y) 
& = & \int_{0}^\infty \rmd \tau \, 
\rme^{- \varepsilon^{\mathring{a}}\tau} s 
((\varepsilon^{\chi_1-\chi+\mathring{a}} x \tau)^{1/z_1},
(\varepsilon^{\chi_2-\chi+\mathring{a}} y \tau)^{1/z_2}) \\
\varepsilon^{-\mathring{a}} \check{s}^\ast(x,y) 
& = &  \int_{0}^\infty \rmd \tau \, 
s ((\varepsilon^{\chi_1-\chi+\mathring{a}} u_1 \tau)^{1/z_1}, (\varepsilon^{\chi_2-\chi+\mathring{a}} u_2 \tau)^{1/z_2})
\end{eqnarray}
with an arbitrary substitution variable $\mathring{a}$ and that 
limit and integration can be interchanged by choosing 
$\mathring{a}=\chi-\mathring{\theta}$.
This yields
\begin{eqnarray}
\tilde{s}^{\diamond}(x,y) 
& = & 
\int_{0}^\infty \rmd \tau \, \lim_{\varepsilon \to 0} \rme^{- \varepsilon^{\chi-\mathring{\theta}}\tau}
s ((\varepsilon^{\chi_1-\mathring{\theta}} x \tau)^{1/z_1}, (\varepsilon^{\chi_2-\mathring{\theta}} y \tau)^{1/z_2}) .
\end{eqnarray}
The assertion then follows from the definition of $\mathring{\theta}$. \hfill \qed
\end{proof} 
Notice the independence of $\tilde{u}_2$ for $\chi_2>\chi_1$
and independence of $\tilde{u}_1$ for $\chi_2<\chi_1$
which means that $\tilde{s}^{\diamond}(p,\omega)$ is non-trivially
scale invariant
only for $\chi_2 \neq \chi_1$ and $\chi\geq\min{(\chi_1,\chi_2)}$. 

To analyse the memory integral we introduce the auxiliary functions
\begin{eqnarray}
\tilde{m}_{0}(x,y) & := & \frac{1}{2\pi} \int_{0}^{\infty} \rmd \tau \, \rme^{-\tau} 
\int_{- \infty}^{\infty} \rmd q \, 
s(q (x\tau)^{\frac{1}{z_1}}, q (y\tau)^{\frac{1}{z_2}})
\nonumber \\ & & \times
s (- q (x\tau)^{\frac{1}{z_1}}, - q (y\tau)^{\frac{1}{z_2}}) 
\label{tm0} \\
\tilde{m}^\ast(x,y) & := & \frac{1}{2\pi} 
\int_{0}^{\infty} \rmd \tau \, \rme^{-\tau} 
\int_{- \infty}^{\infty} \rmd q \, 
s(q (x\tau)^{\frac{1}{z_1}}, q (y\tau)^{\frac{1}{z_2}})
\nonumber \\ & & \times
s ((1- q) (x\tau)^{\frac{1}{z_1}}, (1- q) (y\tau)^{\frac{1}{z_2}}) 
\label{tmpm} \\
\check{m}^\ast(x,y) & := & \frac{1}{2\pi} 
\int_{0}^{\infty} \rmd \tau \, 
\int_{- \infty}^{\infty} \rmd q \, 
s(q (x\tau)^{\frac{1}{z_1}}, q (y\tau)^{\frac{1}{z_2}})
\nonumber \\ & & \times
s ((1- q) (x\tau)^{\frac{1}{z_1}}, (1- q) (y\tau)^{\frac{1}{z_2}}) 
\label{cmpm}
\end{eqnarray}
and the parameter
\begin{equation}
\mathring{\varphi} := \min{((\chi_1 - \mathring{\theta})/z_1,(\chi_2 - \mathring{\theta})/z_2)}.
\end{equation}

\begin{proposition}
\label{Prop:2}
For  $\chi>0$ and $\min{((\chi_1-\chi)/z_1,(\chi_2-\chi)/z_2)} < 1$ the limit
\begin{equation}
\tilde{m}^{\diamond}(\tilde{u}_1,\tilde{u}_2) 
:=  \lim_{\varepsilon \to 0} \varepsilon^{\mathring{\theta}-\chi+\mathring{\phi}} \tilde{m}_{\varepsilon}(\tilde{u}_1,\tilde{u}_2) 
\label{tmbulldef})
\end{equation}
is given by
\begin{equation}
\tilde{m}^{\diamond}(\tilde{u}_1,\tilde{u}_2)
= \left\{
\begin{array}{ll}
\left\{\begin{array}{ll}
\tilde{m}_{0}(\tilde{u}_1,0)
& \chi_1 < \chi_2 \\
\tilde{m}_{0}(\tilde{u}_1,\tilde{u}_2)
& \chi_1 = \chi_2 \\
\tilde{m}_{0}(0,\tilde{u}_2)
& \chi_1 > \chi_2  
\end{array}
\right. & \quad \chi < \mathring{\chi} 
\\[6mm]
\left\{
\begin{array}{ll}
\tilde{m}^\ast
(\tilde{u}_1,0) & \chi_1 < \chi_2\\
\tilde{m}^\ast
(\tilde{u}_1,\tilde{u}_2) & \chi_1 = \chi_2 \\
\tilde{m}^\ast
(0,\tilde{u}_2) & \chi_1 > \chi_2
\end{array} \right. & \quad \chi = \mathring{\chi}  \\[6mm]
\left\{
\begin{array}{ll}
\check{m}^\ast
(\tilde{u}_1,0) & \chi_1 < \chi_2 \\
\check{m}^\ast
(\tilde{u}_1,\tilde{u}_2) & \chi_1 = \chi_2 \\
\check{m}^\ast
(0,\tilde{u}_2) & \chi_1 > \chi_2 
\end{array} \right.  & \quad \chi > \mathring{\chi}.
\end{array} \right. 
\label{mbull}
\end{equation}
\end{proposition}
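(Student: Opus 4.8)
The plan is to mirror the proof of Proposition~\ref{Prop:1}, supplementing the time rescaling there with a second rescaling of the inner momentum variable. Starting from the definition (\ref{tmpmeps}) of $\tilde m_\varepsilon$, I would first carry out the substitution $\tau\to\varepsilon^{\mathring a}\tau$ with $\mathring a=\chi-\mathring\theta$, exactly as in Proposition~\ref{Prop:1}: this turns every scale factor $\varepsilon^{(\chi_i-\chi)/z_i}$ into $\varepsilon^{(\chi_i-\mathring\theta)/z_i}$ and converts the weight $\rme^{-\tau}$ into $\rme^{-\varepsilon^{\chi-\mathring\theta}\tau}$, which tends to $\rme^{-\tau}$ when $\chi=\mathring\theta$ (i.e. $\chi\leq\mathring\chi$) and to $1$ when $\chi>\mathring\chi$, thereby producing the split between $\tilde m_0,\tilde m^\ast$ on the one hand and $\check m^\ast$ on the other. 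I would then rescale the momentum by $q\to\varepsilon^{-\mathring\varphi}q$, extracting the factor $\varepsilon^{-\mathring\varphi}$ from $\rmd q$; combined with the $\varepsilon^{\mathring a}$ coming from $\rmd\tau$ this reproduces precisely the announced prefactor, since $\varepsilon^{\mathring\theta-\chi+\mathring\varphi}=\varepsilon^{\mathring\varphi-\mathring a}$.

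The core of the argument is to track the arguments of the two factors $s_\varepsilon$ under these substitutions. After both rescalings the argument along index $i$ in the first factor becomes $q\,\varepsilon^{(\chi_i-\mathring\theta)/z_i-\mathring\varphi}(x\tau)^{1/z_i}$; because $\mathring\varphi=\min_i (\chi_i-\mathring\theta)/z_i$, this exponent vanishes for the minimizing index and is strictly positive otherwise, so in the limit only the minimizing variable survives while the other argument is sent to $0$. This is the mechanism selecting $(\tilde u_1,0)$, $(\tilde u_1,\tilde u_2)$ or $(0,\tilde u_2)$ according to whether $\chi_1<\chi_2$, $\chi_1=\chi_2$, or $\chi_1>\chi_2$. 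The second factor carries the decisive subtlety: its argument along the minimizing index equals $(\varepsilon^{\mathring\varphi}-q)(y\tau)^{1/z_i}$, so when $\mathring\varphi>0$ — which occurs exactly in the regime $\chi<\mathring\chi$, where $\mathring\theta=\chi$ — the constant $1$ is swamped and $(1-q)$ is effectively replaced by $-q$, explaining why the limit is governed by $\tilde m_0$ with the factor $s(-q\,\cdot\,)$. For $\mathring\varphi=0$ (the regimes $\chi\geq\mathring\chi$) the $(1-q)$ persists, yielding $\tilde m^\ast$ or $\check m^\ast$.

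The hypothesis $\min((\chi_1-\chi)/z_1,(\chi_2-\chi)/z_2)<1$ enters through the cutoff: after the momentum rescaling the range $\varepsilon^{-1}B(x,y)$ becomes $\varepsilon^{\mathring\varphi-1}B(x,y)$, and in the regime $\chi<\mathring\chi$ one has $\mathring\theta=\chi$, so $\mathring\varphi=\min_i(\chi_i-\chi)/z_i$ and the hypothesis reads exactly $\mathring\varphi<1$, ensuring that the range opens up to all of $\R$; in the regimes $\chi\geq\mathring\chi$ one has $\mathring\varphi=0<1$ automatically. The remaining, and hardest, step will be to justify interchanging $\lim_{\varepsilon\to0}$ with both integrations by dominated convergence. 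For this I would invoke the bound $s\leq s(0,0)=1$ from (\ref{Covtau0}) together with the rapid decay of $s$ in each argument furnished by the Lieb--Robinson bound: along any momentum tail at least one factor has a diverging argument — both do for $\tilde m_0$, while for $\tilde m^\ast,\check m^\ast$ the factor $s((1-q)\,\cdot\,)$ controls $q\to-\infty$ and $s(q\,\cdot\,)$ controls $q\to+\infty$ — which secures integrability in $q$ uniformly in $\varepsilon$. The same decay, read now in the variable $\tau^{1/z_i}$, must moreover be strong enough to make $\int_0^\infty\!\rmd\tau$ converge in the case $\chi>\mathring\chi$, where the exponential cutoff is gone; producing this uniform integrable majorant, rather than the bookkeeping of exponents, is where the real work lies.
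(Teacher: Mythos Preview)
Your proposal is correct and follows essentially the same approach as the paper: the paper also performs the two substitutions $\tau\to\varepsilon^{-\mathring a}\tau$ with $\mathring a=\chi-\mathring\theta$ and $q\to\varepsilon^{-\mathring b}q$ with $\mathring b=\mathring\varphi$, then splits into the three cases $\chi\lessgtr\mathring\chi$ exactly as you describe. If anything, your write-up is more explicit than the paper's on why $(1-q)$ collapses to $-q$ in the regime $\chi<\mathring\chi$ and on the role of the hypothesis $\mathring\varphi<1$ for the cutoff, and you rightly flag that producing a uniform integrable majorant for the dominated-convergence step is where the real analytic work lies --- the paper simply asserts that limit and integration can be interchanged.
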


\begin{proof}

With arbitrary substitution variables $\mathring{a},\mathring{b}$
one has
\begin{eqnarray}
\tilde{m}_{\varepsilon}(x,y) 
& = & \varepsilon^{\mathring{a}-\mathring{b}} 
\frac{1}{2\pi} \int_{0}^{\infty} \rmd \tau \, 
\rme^{-\varepsilon^{\mathring{a}}\tau} 
\int_{-\varepsilon^{\mathring{b}-1}B(x,y)}^{\varepsilon^{\mathring{b}-1}B(x,y)} \rmd q \, 
\nonumber \\ & & 
s(
\varepsilon^{(\chi_1+\mathring{a}-\chi)/z_1-\mathring{b}}  
q (x \tau)^{1/z_1}, \varepsilon^{(\chi_2+\mathring{a}-\chi)/z_2-\mathring{b}}
q (y \tau)^{1/z_2})
\nonumber \\ & & 
\times s((1-\varepsilon^{-\mathring{b}}q) (\varepsilon^{\chi_1-\chi+\mathring{a}} x \tau)^{1/z_1},
(1-\varepsilon^{-\mathring{b}}q) (\varepsilon^{\chi_2-\chi+\mathring{a}} y \tau)^{1/z_2}).
\end{eqnarray}
Taking $\mathring{a} = \chi - \mathring{\theta}$ and
$\mathring{b} = 
\mathring{\varphi}$ yields $\mathring{\varphi} < 1$ and
allows for interchanging limit and integration. Hence
\begin{eqnarray}
\tilde{m}_{\varepsilon}(x,y) 
& = & \varepsilon^{\chi - \mathring{\theta} - \mathring{\varphi}} \frac{1}{2\pi} \int_{0}^{\infty} \rmd \tau \, \int_{-\infty}^{\infty} \rmd q \, \lim_{\varepsilon \to 0}
\rme^{-\varepsilon^{\chi - \mathring{\theta}}\tau} 
\nonumber \\ & & 
\times s(
\varepsilon^{(\chi_1 - \mathring{\theta})/z_1-\mathring{\phi}}  
q (x \tau)^{1/z_1}, 
\varepsilon^{(\chi_2 - \mathring{\theta})/z_2-\mathring{\phi}}
q (y \tau)^{1/z_2})
\nonumber \\ & & 
\times s((1-\varepsilon^{-\mathring{\phi}}q) (\varepsilon^{\chi_1-\mathring{\theta}} x \tau)^{1/z_1},
(1-\varepsilon^{-\mathring{\phi}}q) (\varepsilon^{\chi_2-\mathring{\theta}} y \tau)^{1/z_2}).
\label{tmpmeps2}
\end{eqnarray}
(i) For $\chi < \mathring{\chi}$ one has $\mathring{\theta}=\chi$
and $\mathring{\phi} = 
\min{((\chi_1 - \chi)/z_1,(\chi_2 - \chi)/z_2)}$ in the range $0<\mathring{\phi}<1$. A further substitution of
variables $q \to q |p|$ in (\ref{tmpmeps2}) yields (\ref{mbull})
for the range $\chi < \mathring{\chi}$.\\
(ii) For $\chi = \mathring{\chi}$ one has $\mathring{\theta}=\chi=\mathring{\chi}$
and $\mathring{\phi} = 0$ from which 
(\ref{mbull}) for $\chi = \mathring{\chi}$ follows.\\
(iii) For $\chi > \mathring{\chi}$ one has $\mathring{\theta} = 
\mathring{\chi}$ and $\mathring{\phi} = 0$. Making the
analogous substitution of variables in
(\ref{cmpm}) 
leads to (\ref{mbull})
for the range $\chi > \mathring{\chi}$.
\qed
\end{proof}

With the help of the 
functions $\tilde{s}^{\diamond}(\cdot,\cdot)$ 
and $\tilde{m}^{\diamond}(\cdot,\cdot)$ the notion of a
mesoscale solution of the mode coupling equation (\ref{dMCTLFT})
can now be made precise by making use of the representation
(\ref{dMCTLFT3}) of the discrete mode coupling equation.

\begin{definition}
\label{Def:mesosolFLT}
A generalized mesoscopic solution 
\begin{equation}
\tilde{S}^{\diamond}(p,\omega) := \lim_{\varepsilon \to 0} \varepsilon^{\mathring{\theta}} 
\tilde{S}(\varepsilon p, \varepsilon^{\chi} \omega)
= \frac{K}{\sqrt{2\pi}} \frac{1}{\omega} \tilde{s}^{\diamond}(\tilde{u}_1,\tilde{u}_2)
\label{tSmesodef}
\end{equation}
is a solution of the mode coupling equation (\ref{dMCTLFT}) with 
$q=\varepsilon p$ and $s = \varepsilon^{\chi} \omega$ in the limit 
$ \varepsilon\to 0$, such that
$\tilde{s}^{\diamond}(\tilde{u}_1,\tilde{u}_2)$
defined in (\ref{tsbulldef}) and $\tilde{m}^{\diamond}(\tilde{u}_1,\tilde{u}_2)$
defined in (\ref{tmbulldef}) solve the equation
\begin{equation}
\tilde{s}^{\diamond} 
= \lim_{\varepsilon \to 0}  \left[\varepsilon^{\chi -\mathring{\theta}}
+ \frac{\mathsf{e}(\varepsilon |p|)}{\varepsilon^2 |p|^2}
\left(\varepsilon^{2-\mathring{\theta}} \frac{|p|^2}{\omega} D  
+ 2 g^2  \varepsilon^{3+2\kappa-2\mathring{\theta}} \frac{|p|^3}{\omega^2}
\tilde{m}^{\diamond}
\right)\right]^{-1}
\label{dMCTLFT4}
\end{equation}
for some choice of the parameters $z_i,l_i,\chi_i$.
\end{definition}

To simplify notation we have suppressed in the defining equation 
(\ref{dMCTLFT4}) the dependence on the scaling
variables $\tilde{u}_1,\tilde{u}_2$ in the functions
$\tilde{s}^{\diamond},\tilde{m}^{\diamond}$
and in ratios of the coordinates $p,\omega$ which according to
(\ref{pzo}) are given by
\begin{equation}
\frac{|p|^2}{\omega} = 
\left(\frac{\tilde{u}_1}{l_1}
\right)^{\frac{z_2-2}{z_2-z_1}}
\left(\frac{\tilde{u}_2}{l_2} \right)^{\frac{2-z_1}{z_2-z_1}}, \quad
\frac{|p|^{3/2}}{\omega} = 
\left(\frac{\tilde{u}_1}{l_1}
\right)^{\frac{z_2-3/2}{z_2-z_1}}
\left(\frac{\tilde{u}_2}{l_2} \right)^{\frac{3/2-z_1}{z_2-z_1}}.
\label{pzo2}
\end{equation}
The generalized mesoscopic 
solution defined in (\ref{tSmesodef}) coincides with the ``natural''
mesoscale limit (\ref{Smesodef}) for $\chi\leq \mathring{\chi}$.
Generally one has
\begin{equation}
\tilde{S}^{\diamond}(p,\omega) = 
\left\{\begin{array}{ll}
\tilde{S}^\ast(p,\omega) & \chi\leq \mathring{\chi} \\
\lim_{\varepsilon \to 0} \varepsilon^{\mathring{\chi}} 
\tilde{S}(\varepsilon p, \varepsilon^{\chi} \omega) & \chi > \mathring{\chi}.
\end{array}\right.
\end{equation}
With these tools at hand we are in a position to state and prove the main results.

\begin{theorem}
\label{Theo:1}
Let $s (\cdot,\cdot)$ 
satisfy the integral equation
\begin{equation}
\tilde{s}^{\ast} (x,y)
=  \left[1+ y + x^2 \tilde{m}^{\ast}(x,y) \right]^{-1}, 
\quad x,y \in \R_0^+
\label{MCTint}
\end{equation}
with 
\begin{eqnarray}
\tilde{s}^\ast(x,y) & := &
\int_{0}^\infty \rmd \tau \, \rme^{- \tau}
s ((x \tau)^{1/z_1}, (y \tau)^{1/z_2}) 
\\
\tilde{m}^\ast(x,y) & := & \frac{1}{2\pi} 
\int_{0}^{\infty} \rmd \tau \, \rme^{-\tau} 
\int_{- \infty}^{\infty} \rmd q \, 
s(q (x\tau)^{\frac{1}{z_1}}, q (y\tau)^{\frac{1}{z_2}})
\nonumber \\ & & \times
s((1-q) (x\tau)^{\frac{1}{z_1}}, (1-q) (y\tau)^{\frac{1}{z_2}})
\end{eqnarray}
Then with the convention $z_2>z_1$ the mesoscopic limit
\begin{equation}
S^{\diamond}(p,\omega)  := 
\lim_{\varepsilon \to 0} \varepsilon^{\mathring{\theta}}
\tilde{S}(\varepsilon p,\varepsilon^{\chi}\omega) = \frac{K}{\sqrt{2\pi}} \frac{1}{\omega} 
\left\{\begin{array}{ll} 
1 & \chi < \mathring{\chi} \\[2mm]
\left\{\begin{array}{ll}
\tilde{s}^\ast(\tilde{u}_1,0) & \chi_1 < \chi_2  \\[2mm]
\tilde{s}^\ast(\tilde{u}_1,\tilde{u}_2) & \chi_1 = \chi_2  \\[2mm]
\tilde{s}^\ast(0,\tilde{u}_2) & \chi_1 > \chi_2 
\end{array} \right. 
\quad & \chi = \mathring{\chi} \\[8mm]
\left\{
\begin{array}{ll}
\check{s}^\ast(\tilde{u}_1,0) & \chi_1 < \chi_2  \\[2mm]
\check{s}^\ast(\tilde{u}_1,\tilde{u}_2) & \chi_1 = \chi_2  \\[2mm]
\check{s}^\ast(0,\tilde{u}_2) & \chi_1 > \chi_2 
\end{array} \right. 
& \chi > \mathring{\chi}
\end{array} \right.
\label{DSFtheo} 
\end{equation}
is a generalized mesoscopic solution of the mode coupling equation
(\ref{dMCTLFT})
if and only if 
\begin{equation}
\tilde{u}_1 \equiv \tilde{u}_1(p,\omega) :=  \frac{\sqrt{2} |g| |p|^{3/2}}{\omega}, \quad 
\tilde{u}_2 \equiv \tilde{u}_2(p,\omega) := \frac{D p^{2}}{\omega}
\label{u1u2theo1}
\end{equation}
and $\chi_1=3/2+\kappa$, $\chi_2 = 2$. 
\end{theorem}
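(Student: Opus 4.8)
The plan is to feed the limiting forms of the Laplace transform and of the memory integral, furnished by Propositions~\ref{Prop:1} and~\ref{Prop:2}, into the defining relation~(\ref{dMCTLFT4}) of a generalized mesoscopic solution, and then to read off for which parameters $z_i,l_i,\chi_i$ the $\varepsilon\to0$ limit is simultaneously finite and non-trivial. The starting point is that $\mathsf{e}(\varepsilon|p|)/(\varepsilon^2|p|^2)\to1$, so that the bracket in~(\ref{dMCTLFT4}) is governed by three powers of $\varepsilon$ attached to the constant term, the diffusion term and the nonlinear memory term, with exponents $\chi-\mathring{\theta}$, $2-\mathring{\theta}$ and $3+2\kappa-2\mathring{\theta}$. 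I would first note that the constant term contributes exactly $1$ when $\chi\le\mathring{\chi}$, and that a non-trivial limit requires at least one of the remaining exponents to vanish and none to turn negative.

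The core of the argument is to identify the two surviving coefficients with the arguments $x,y$ of the integral equation~(\ref{MCTint}). Via the inverse relation~(\ref{pzo2}) the diffusion coefficient $D|p|^2/\omega$ and the nonlinear coefficient $2g^2|p|^3/\omega^2$ are monomials in $\tilde{u}_1,\tilde{u}_2$. Since~(\ref{MCTint}) contains the diffusion piece as $y$, independent of $x$, and the memory piece with prefactor $x^2$, independent of $y$, I would demand that the first coefficient be a pure power of $\tilde{u}_2$ and the second a pure power of $\tilde{u}_1$. Requiring the $\tilde{u}_1$-exponent of $|p|^2/\omega$ and the $\tilde{u}_2$-exponent of $|p|^{3/2}/\omega$ to vanish forces $z_2=2$ and $z_1=3/2$; matching the leftover prefactors then forces $l_2=D$ and $l_1=\sqrt{2}\,|g|$, which is precisely~(\ref{u1u2theo1}). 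With these values the diffusion coefficient becomes $\tilde{u}_2$ and the nonlinear coefficient becomes $\tilde{u}_1^2$, so the bracket collapses to $1+\tilde{u}_2+\tilde{u}_1^2\tilde{m}^{\diamond}$ wherever the attached $\varepsilon$-powers are of order one.

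It then remains to pin down $\chi_1,\chi_2$ and to check the three regimes. The exponent $3+2\kappa-2\mathring{\theta}$ of the nonlinear term vanishes at $\mathring{\theta}=3/2+\kappa$ and the exponent $2-\mathring{\theta}$ of the diffusion term at $\mathring{\theta}=2$; identifying these marginal scales with the critical scales of the two modes yields $\chi_1=3/2+\kappa$ and $\chi_2=2$. Working at the critical scale $\chi=\mathring{\chi}=\min(\chi_1,\chi_2)$ I would verify self-consistency case by case: for $\kappa<1/2$ one has $\chi_1<\chi_2$, the diffusion power $\varepsilon^{1/2-\kappa}$ suppresses the $y$-term, Proposition~\ref{Prop:1} selects $\tilde{s}^\ast(\tilde{u}_1,0)$, and~(\ref{dMCTLFT4}) reduces to~(\ref{MCTint}) at $y=0$; for $\kappa>1/2$ one has $\chi_1>\chi_2$, the nonlinear power $\varepsilon^{2\kappa-1}$ suppresses the memory term, and the selected slice $\tilde{s}^\ast(0,\tilde{u}_2)=[1+\tilde{u}_2]^{-1}$ is the exact Gaussian; at $\kappa=1/2$ both exponents vanish together and the full equation survives with $\tilde{s}^\ast(\tilde{u}_1,\tilde{u}_2)$. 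Conversely, any other assignment either drives a surviving exponent negative (divergent limit) or positive (trivial limit), or leaves mixed $\tilde{u}_1,\tilde{u}_2$ dependence incompatible with~(\ref{MCTint}), which gives the ``only if'' direction.

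I expect the principal difficulty to lie in the uniqueness (``only if'') half: one must argue that the two monomial conditions coming from~(\ref{pzo2}) have no solution other than $z_1=3/2,\,z_2=2$, and that the interchange of limit and integration underpinning Propositions~\ref{Prop:1} and~\ref{Prop:2}---in particular the dominated-convergence step encoded in the hypothesis $\min((\chi_1-\chi)/z_1,(\chi_2-\chi)/z_2)<1$---survives the crossover $\kappa=1/2$, where the two marginal modes coincide and the plain exponent bookkeeping degenerates.
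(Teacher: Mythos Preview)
Your plan matches the paper's proof almost step for step: the reduction via $\mathsf{e}(\varepsilon|p|)/(\varepsilon^2|p|^2)\to1$, the use of Propositions~\ref{Prop:1} and~\ref{Prop:2}, the identification of $z_1,z_2,l_1,l_2$ by forcing the diffusion and memory coefficients to be pure monomials in $\tilde u_2$ and $\tilde u_1$ via~(\ref{pzo2}), and the case split over $\kappa\lessgtr 1/2$ at $\chi=\mathring{\chi}$ are all exactly what the paper does (its cases~(Aa)--(Ac) and~(B)). Your ``only if'' worry is also well placed and is handled in the paper precisely by the monomial matching you describe.

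The one genuine omission is the regime $\chi>\mathring{\chi}$. Here Proposition~\ref{Prop:1} returns the functions $\check{s}^\ast$ rather than $\tilde{s}^\ast$, and the constant term $\varepsilon^{\chi-\mathring{\theta}}$ in~(\ref{dMCTLFT4}) drops out. You therefore need to verify that $\check{s}^\ast$ satisfies the correspondingly truncated equation $\check{s}^\ast(x,y)=[\,y+x^2\check{m}^\ast(x,y)\,]^{-1}$. The paper obtains this not from scratch but by a scaling limit of the assumed integral equation~(\ref{MCTint}): substituting $(x,y)\mapsto(x,x\tilde u_2/\tilde u_1)$, multiplying through, and sending $x\to\infty$ collapses the constant $1$ and converts $\tilde{s}^\ast,\tilde{m}^\ast$ into $\check{s}^\ast,\check{m}^\ast$. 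Without this step the $\chi>\mathring{\chi}$ branch of~(\ref{DSFtheo}) is asserted but not checked, so you should add it to your outline.
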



\begin{proof}
Since $\lim_{\varepsilon \to 0} \mathsf{e}(\varepsilon |p|)/(\varepsilon^2 |p|^2)=1$ 
it needs to proved that
\begin{equation}
\tilde{s}^{\diamond}(\tilde{u}_1,\tilde{u}_2)
= \lim_{\varepsilon \to 0}  \left[\varepsilon^{\chi -\mathring{\theta}}
+ 
\varepsilon^{2-\mathring{\theta}} \frac{D |p|^2}{\omega}  
+ \varepsilon^{3+2\kappa-2\mathring{\theta}} \frac{2 g^2  |p|^3}{\omega^2}
\tilde{m}^{\diamond}(\tilde{u}_1,\tilde{u}_2)
\right]^{-1}
\label{dMCTLFT4b}
\end{equation}
with $\tilde{s}^{\diamond}(\tilde{u}_1,\tilde{u}_2)$ from Proposition \ref{Prop:1} and $\tilde{m}^{\diamond}(\tilde{u}_1,\tilde{u}_2)$ from Proposition \ref{Prop:2}.

From the integral equation
(\ref{MCTint}) one gets for $\chi\leq\mathring{\chi}$
\begin{equation}
\tilde{s}^{\diamond}(\tilde{u}_1,\tilde{u}_2)
=  \left\{
\begin{array}{ll}
\begin{array}{ll}
1 &  \\
\end{array}  & \quad \chi < \mathring{\chi} \\[2mm]
\left\{
\begin{array}{ll}
\left[1+ \tilde{u}_1^2 \tilde{m}^{\ast}(\tilde{u}_1,0) \right]^{-1} & \chi_1 < \chi_2\\
\left[1+ \tilde{u}_2 + \tilde{u}_1^2 \tilde{m}^{\ast}(\tilde{u}_1,\tilde{u}_2) \right]^{-1} & \chi_1 = \chi_2 \\
\left(1+ \tilde{u}_2\right)^{-1} & \chi_1 > \chi_2
\end{array} \right. & \quad \chi = \mathring{\chi} 
\end{array} \right. 
\label{tsb2}
\end{equation}
and therefore (\ref{pzo2}) implies the following.\\
\underline{(Aa) $\chi = \mathring{\chi}$, $\chi_1 < \chi_2$:}
\begin{equation}
1+ \tilde{u}_1^2 \tilde{m}^{\ast}(\tilde{u}_1,0) 
= \lim_{\varepsilon \to 0}  \left[1
+ \varepsilon^{2-\chi_1} \frac{|p|^2}{\omega} D  
+ 2 g^2  \varepsilon^{3+2\kappa-2\chi_1} \frac{|p|^3}{\omega^2}
\tilde{m}^{\ast}(\tilde{u}_1,0)
\right]
\label{dMCTLFT4Aa}
\end{equation}
Since the l.h.s. is a function only of $\tilde{u}_1$ a non-trivial
solution exists only with $z_1=3/2$, $l_1 = \sqrt{2} |g|$, and
$\chi_1 = 3/2+\kappa$ in the range $0\leq \kappa < 1/2$. 
The parameters $z_2$, $l_2$ and $\chi_2$ are undetermined
in this range.\\
\underline{(Ab) $\chi = \mathring{\chi}$, $\chi_1 > \chi_2$:}
\begin{equation}
1+ \tilde{u}_2
= \lim_{\varepsilon \to 0}  \left[1
+ \varepsilon^{2-\chi_2} \frac{|p|^2}{\omega} D  
+ 2 g^2  \varepsilon^{3+2\kappa-2\chi_2} \frac{|p|^3}{\omega^2}
\tilde{m}^{\ast}(0,\tilde{u}_2)
\right]
\label{dMCTLFT4Ab}
\end{equation}
Since the l.h.s. is a function only of $\tilde{u}_2$ a non-trivial
solution exists only with $z_2=2$, $l_2=D$, and $\chi_2 = 2$
in the range $\kappa > 1/2$. The parameters 
$z_1$, $l_1$ and $\chi_1$ are undetermined in this range.
\\
\underline{(Ac) $\chi = \mathring{\chi}$, $\chi_1 = \chi_2$:}
 Using the results from cases (Aa) and (Ab)
requires $\kappa=1/2$ with $\chi_1 = \chi_2 = 2$ and
the mode coupling equation reduces to
\begin{equation}
\tilde{s}^{\ast} (\tilde{u}_1,\tilde{u}_2)
=  \left[1 + \tilde{u}_2 + \tilde{u}_1^2
\tilde{m}^{\ast}(\tilde{u}_1,\tilde{u}_2)
\right]^{-1}
\end{equation}
\label{dMCTLFT4Ac}
which is true by assumption.\\

\noindent \underline{(B) $\chi < \mathring{\chi}$:}
\begin{equation}
1 = \lim_{\varepsilon \to 0}  \left(1
+ \varepsilon^{2-\chi} \frac{|p|^2}{\omega} D  
+ 2 g^2  \varepsilon^{3+2\kappa-2\chi} \frac{|p|^3}{\omega^2}
\tilde{m}^{\diamond}
\right)^{-1}
\label{dMCTLFT4B}
\end{equation}
In the range $0\leq\kappa \leq 1/2$ one has $\mathring{\chi}=3/2+\kappa \leq 2$
and in the range $\kappa > 1/2$ one has $\mathring{\chi}=2$.
Since $\chi < \mathring{\chi}$ the equation is satisfied for all $\kappa$.\\

\noindent \underline{(C) $\chi > \mathring{\chi}$:} 
The mode coupling equation reduces to
\begin{equation}
\tilde{s}^{\diamond} 
= \lim_{\varepsilon \to 0} 
\left(\varepsilon^{2-\mathring{\chi}} \frac{|p|^2}{\omega} D  
+ 2 g^2  \varepsilon^{3+2\kappa-2\mathring{\chi}} \frac{|p|^3}{\omega^2}
\tilde{m}^{\diamond}
\right)^{-1}
\label{dMCTLFT4C}
\end{equation}
On the other hand, the integral equation (\ref{MCTint}) implies by 
substitution of variables
\begin{equation}
x \tilde{u}_1^{-1} \tilde{s}^{\ast} (x,\tilde{u}_2/\tilde{u}_1)
=  \left[\tilde{u}_1 x^{-1}+ \tilde{u}_2 + \tilde{u}_1^2 x \tilde{u}_1^{-1} \tilde{m}^{\ast}(x,x\tilde{u}_2/\tilde{u}_1) \right]^{-1}, 
\quad x,\tilde{u}_1,\tilde{u}_2 \in \R^+.
\end{equation}
Taking the limit $x\to\infty$ yields the integral equation
\begin{equation}
\check{s}^{\ast}(x,y)
=  \left[y + x^2 \check{m}^{\ast}(x,y) \right]^{-1}
\label{checkMCT}
\end{equation}
so that from Proposition \ref{Prop:1} one gets
\begin{equation}
\tilde{s}^{\diamond}(\tilde{u}_1,\tilde{u}_2)
=  \left\{
\begin{array}{ll}
\left[\tilde{u}_1^2 \check{m}^{\ast}(\tilde{u}_1,0) \right]^{-1} & \chi_1 < \chi_2 \\
\left[\tilde{u}_2 + \tilde{u}_1^2 \check{m}^{\ast}(\tilde{u}_1,\tilde{u}_2) \right]^{-1} & \chi_1 = \chi_2 \\
\tilde{u}_2^{-1} & \chi_1 > \chi_2 .
\end{array} \right.  
\label{tsb2b}
\end{equation}
With
\begin{equation}
\tilde{m}^{\diamond}(\tilde{u}_1,\tilde{u}_2)
= \left\{
\begin{array}{ll}
\check{m}^\ast
(\tilde{u}_1,0) & \chi_1 < \chi_2 \\
\check{m}^\ast
(\tilde{u}_1,\tilde{u}_2) & \chi_1 = \chi_2 \\
\check{m}^\ast
(0,\tilde{u}_2) & \chi_1 > \chi_2 
\end{array} \right. 
\end{equation}
from Proposition \ref{Prop:2} the proof of (\ref{DSFtheo})
becomes analogous to the case B.
\qed
\end{proof}

\begin{remark}
It is clear from (\ref{dMCTLFT4Ac}) 
that the integral equation (\ref{MCTint}) is not
only a sufficient but also a necessary condition.
\end{remark}

\begin{remark}
From the scaling variables (\ref{u1u2theo1}) one reads off
\begin{equation}
l_1 = \sqrt{2} |g|, \quad l_2 = D
\label{l1l2theo1}
\end{equation}
for the scale parameters and
\begin{equation}
z_1 = \frac{3}{2}, \quad z_2 = 2
\label{z1z2theo1}
\end{equation}
for the dynamical exponents.
\end{remark}

\begin{corollary}
\label{Coro:1}
Let $s (\cdot,\cdot)$ 
satisfy the integral equation (\ref{MCTint}).
Then the mesoscopic limit
\begin{equation}
\tilde{S}^{\ast}(p,\omega) := 
\lim_{\varepsilon \to 0} \varepsilon^{\chi}
\tilde{S}(\varepsilon p,\varepsilon^{\chi}\omega) 
\end{equation}
of the Laplace transform of the DSF is
given by
\begin{equation}
\tilde{S}^{\ast}(p,\omega) = \frac{K}{\sqrt{2\pi}} \frac{1}{\omega} 
\left\{\begin{array}{ll} 
1 & \chi < \mathring{\chi} \\[2mm]
\left\{\begin{array}{ll}
\tilde{s}^\ast(\tilde{u}_1,0) & \chi_1 < \chi_2  \\[2mm]
\tilde{s}^\ast(\tilde{u}_1,\tilde{u}_2) & \chi_1 = \chi_2  \\[2mm]
\tilde{s}^\ast(0,\tilde{u}_2) & \chi_1 > \chi_2 
\end{array} \right. 
\quad & \chi = \mathring{\chi} \\[8mm]
0& \chi > \mathring{\chi}
\end{array} \right.
\label{DSFcoro1}
\end{equation}
with the scaling variables (\ref{u1u2theo1})
and $\chi_1=3/2+\kappa$, $\chi_2 = 2$. 
\end{corollary}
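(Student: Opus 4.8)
The plan is to read the Corollary off directly from Theorem~\ref{Theo:1} by comparing the two normalizing powers of $\varepsilon$. The generalized mesoscopic solution of Theorem~\ref{Theo:1} is normalized by $\varepsilon^{\mathring{\theta}}$ with $\mathring{\theta}=\min(\chi,\mathring{\chi})$ as fixed in (\ref{ringchidef}), whereas the quantity $\tilde{S}^{\ast}$ in the Corollary carries the power $\varepsilon^{\chi}$. First I would record the trivial identity
\begin{equation}
\varepsilon^{\chi}\tilde{S}(\varepsilon p,\varepsilon^{\chi}\omega)
= \varepsilon^{\chi-\mathring{\theta}}\left[\varepsilon^{\mathring{\theta}}\tilde{S}(\varepsilon p,\varepsilon^{\chi}\omega)\right],
\end{equation}
so that the bracket converges, by Theorem~\ref{Theo:1}, to the finite limit $\tilde{S}^{\diamond}(p,\omega)$ given by (\ref{DSFtheo}). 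Everything then reduces to evaluating the scalar prefactor $\varepsilon^{\chi-\mathring{\theta}}$ as $\varepsilon\to 0$, with the scaling variables (\ref{u1u2theo1}) and the exponents $\chi_1=3/2+\kappa$, $\chi_2=2$ inherited verbatim, since the change of variables is untouched.

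Next I would split into the three regimes according to the value of $\mathring{\theta}$. For $\chi\leq\mathring{\chi}$ one has $\mathring{\theta}=\chi$, hence $\chi-\mathring{\theta}=0$ and the prefactor equals $1$; thus $\tilde{S}^{\ast}(p,\omega)=\tilde{S}^{\diamond}(p,\omega)$, reproducing the $\chi<\mathring{\chi}$ and $\chi=\mathring{\chi}$ branches of (\ref{DSFcoro1}) from the corresponding lines of (\ref{DSFtheo}). For $\chi>\mathring{\chi}$ one has $\mathring{\theta}=\mathring{\chi}<\chi$, so $\chi-\mathring{\theta}=\chi-\mathring{\chi}>0$ and the prefactor tends to $0$; multiplying the finite limit $\tilde{S}^{\diamond}(p,\omega)$ by this vanishing factor yields $\tilde{S}^{\ast}(p,\omega)=0$, which is the third branch of (\ref{DSFcoro1}). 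This is precisely the dichotomy already anticipated in the display following Definition~\ref{Def:mesosolFLT}.

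The only point requiring care—and the step I would regard as the crux—is ensuring that in the regime $\chi>\mathring{\chi}$ the bracketed quantity genuinely possesses a \emph{finite} limit, so that the product with the vanishing prefactor is unambiguously zero rather than an indeterminate $0\cdot\infty$ form. This finiteness is supplied by Theorem~\ref{Theo:1}, whose third branch exhibits $\tilde{S}^{\diamond}$ through the functions $\check{s}^{\ast}(\cdot,\cdot)$ of (\ref{cspm}): by the relation (\ref{checkMCT}) derived inside that proof, namely $\check{s}^{\ast}(x,y)=[\,y+x^2\check{m}^{\ast}(x,y)\,]^{-1}$, these are manifestly finite for $(x,y)\in\R_0^+\times\R_0^+$ not both zero. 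Granting this, the argument is pure bookkeeping and needs no estimate beyond those already established in Propositions~\ref{Prop:1} and~\ref{Prop:2}.
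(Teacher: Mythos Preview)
Your argument is correct and is precisely the route the paper takes: Corollary~\ref{Coro:1} is stated without proof because it is meant to be read off from Theorem~\ref{Theo:1} via the identity $\varepsilon^{\chi}\tilde{S}=\varepsilon^{\chi-\mathring{\theta}}\bigl[\varepsilon^{\mathring{\theta}}\tilde{S}\bigr]$, exactly as you do, and the paper already records the needed dichotomy in the display immediately following Definition~\ref{Def:mesosolFLT}. Your extra care in checking that $\tilde{S}^{\diamond}$ is finite in the regime $\chi>\mathring{\chi}$ (so that the product is genuinely $0$ rather than indeterminate) is a welcome clarification that the paper leaves implicit.
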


\begin{corollary}
\label{Coro:2}
Let $s (\cdot,\cdot)$ 
satisfy the integral equation (\ref{MCTint}).
Then for $\chi > \mathring{\chi}$ the time integral \begin{equation}
\check{S}^{\ast}(p) = 
\lim_{\varepsilon \to 0} \varepsilon^{\mathring{\chi}}
\int_0^\infty \rmd \tau \, 
S(\varepsilon p, \tau)
\end{equation}
of the DSF is given by
\begin{eqnarray}
\check{S}^{\ast}(p) 
& = & 
 \frac{K}{\sqrt{2\pi}} 
\left\{\begin{array}{ll} \frac{1}{\sqrt{2}|g| |p|^{3/2}} \check{s}^\ast(1,0) & \chi_1 < \chi_2  \\[2mm]
\check{s}^\ast(\sqrt{2}|g| |p|^{3/2}, D |p|^{2}) & \chi_1 = \chi_2  \\[2mm]
\frac{1}{D |p|^{2}}  \check{s}^\ast(0,1) & \chi_1 > \chi_2 
\end{array} \right. 
\label{DSFcoro2}
\end{eqnarray}
with the scaling variables (\ref{u1u2theo1})
and $\chi_1=3/2+\kappa$, $\chi_2 = 2$. 
\end{corollary}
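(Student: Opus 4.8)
The plan is to reduce Corollary \ref{Coro:2} to the $\chi>\mathring{\chi}$ branch of Theorem \ref{Theo:1} by recognising the time integral as the Laplace transform evaluated at vanishing Laplace variable, and then exploiting a scaling homogeneity of $\check{s}^{\ast}$. First I would repeat the change of variables of Section 4.1 with the Laplace variable set to zero. Because the regularising factor $\rme^{-s\tau}$ is then absent, inserting $\tau=\varepsilon^{-\chi}t$ into the representation (\ref{DSFepspt}) of $S(\varepsilon p,\tau)$ and integrating over $\tau$ gives
\[
\int_{0}^{\infty}\rmd\tau\,S(\varepsilon p,\tau)
=\frac{K}{\sqrt{2\pi}}\int_{0}^{\infty}\rmd\tau\,
s_{\varepsilon}\big((\varepsilon^{\chi_1}l_1|p|^{z_1}\tau)^{1/z_1},(\varepsilon^{\chi_2}l_2|p|^{z_2}\tau)^{1/z_2}\big),
\]
so that, using $s_{\varepsilon}\to s$ as $\varepsilon\to0$, the integral converges to $\check{s}^{\ast}(\varepsilon^{\chi_1}l_1|p|^{z_1},\varepsilon^{\chi_2}l_2|p|^{z_2})$ with $\check{s}^{\ast}$ as in (\ref{cspm}). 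This is precisely the object already produced in the $\chi>\mathring{\chi}$ case of Propositions \ref{Prop:1} and \ref{Prop:2}, so the analytic groundwork for interchanging limit and integration is in place.

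The key algebraic input is the homogeneity relation $\check{s}^{\ast}(cx,cy)=c^{-1}\check{s}^{\ast}(x,y)$ for every $c>0$, which follows immediately by rescaling the integration variable $\tau$ in the defining integral (\ref{cspm}); in particular $\check{s}^{\ast}(x,0)=x^{-1}\check{s}^{\ast}(1,0)$ and $\check{s}^{\ast}(0,y)=y^{-1}\check{s}^{\ast}(0,1)$. I would then multiply by $\varepsilon^{\mathring{\chi}}$ and factor $c=\varepsilon^{\mathring{\chi}}$ out of both arguments. In the off-diagonal cases this leaves one argument fixed and drives the other to zero: for $\chi_1<\chi_2$ (so $\mathring{\chi}=\chi_1$, i.e.\ $\kappa<1/2$) one is left with $\check{s}^{\ast}(l_1|p|^{z_1},\varepsilon^{\chi_2-\chi_1}l_2|p|^{z_2})\to\check{s}^{\ast}(l_1|p|^{z_1},0)$, and the single-variable homogeneity together with the Theorem \ref{Theo:1} values $l_1=\sqrt{2}|g|$, $z_1=3/2$ produces the first line of (\ref{DSFcoro2}); the case $\chi_1>\chi_2$ is symmetric and yields the third line with $l_2=D$, $z_2=2$. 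In the diagonal case $\chi_1=\chi_2=2$ the factorisation leaves $\check{s}^{\ast}(l_1|p|^{z_1},l_2|p|^{z_2})$ with no residual $\varepsilon$-dependence, which is the middle line. Multiplying throughout by $K/\sqrt{2\pi}$ gives (\ref{DSFcoro2}).

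As a consistency check one can instead start from the generalized mesoscopic solution of Theorem \ref{Theo:1}: for $\chi>\mathring{\chi}$ one has $\tilde{S}^{\diamond}(p,\omega)=\frac{K}{\sqrt{2\pi}}\,\omega^{-1}\check{s}^{\ast}$ evaluated at the arguments listed in (\ref{DSFtheo}), with $\tilde{u}_i=l_i|p|^{z_i}/\omega$, and the same homogeneity relation shows the product $\omega^{-1}\check{s}^{\ast}(\tilde{u}_1,\tilde{u}_2)$ to be independent of $\omega$ and equal to the right-hand side of (\ref{DSFcoro2}). This is the analytic reflection of the fact that the time integral is the $s\to0$ value of the Laplace transform, and that for $\chi>\mathring{\chi}$ the exponential damping $\rme^{-s\tau}$ has already been washed out in the mesoscopic limit. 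I expect the main obstacle to be the justification of interchanging $\lim_{\varepsilon\to0}$ with the $\tau$-integration now that the factor $\rme^{-\tau}$ is gone: convergence of the $\check{s}^{\ast}$-type integrals hinges on the large-argument decay of the solution $s$ of (\ref{MCTint}), which is supplied by the Lieb-Robinson bound once the $\varepsilon^{\mathring{\chi}}$-rescaling has sent the dominant scaling argument to infinity, exactly as in the $\chi>\mathring{\chi}$ analysis of Propositions \ref{Prop:1} and \ref{Prop:2}.
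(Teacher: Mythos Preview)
Your proposal is correct and essentially coincides with the paper's own argument. The paper also identifies $\check{S}^{\ast}(p)$ with $\lim_{\omega\to 0}\tilde{S}^{\diamond}(p,\omega)$ (your ``consistency check'') and then invokes exactly the homogeneity of $\check{s}^{\ast}$ that you state, recorded there as the three special cases $x\,\check{s}^{\ast}(x,0)=\check{s}^{\ast}(1,0)$, $\omega^{-1}\check{s}^{\ast}(\tilde{u}_1,\tilde{u}_2)=\check{s}^{\ast}(l_1|p|^{z_1},l_2|p|^{z_2})$, and $y\,\check{s}^{\ast}(0,y)=\check{s}^{\ast}(0,1)$; your direct computation in the first paragraph is just a minor rephrasing of the same idea.
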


Corollary 2 follows from 
$\check{S}^{\ast}(p) 
= \lim_{\varepsilon \to 0} \varepsilon^{\mathring{\theta}}
\tilde{S}_{\varepsilon}(p, 0) \\
= \lim_{\omega \to 0} S^{\diamond}(p,\omega) $
and the properties
\begin{eqnarray}
x \check{s}^\ast(x,0) & = & \check{s}^\ast(1,0) \\
\frac{1}{\omega} \check{s}^\ast(\tilde{u}_1,\tilde{u}_2) & = &  \int_{0}^\infty \rmd \tau \, s (|p| (l_1 \tau)^{1/z_1}, |p|
(l_2 \tau)^{1/z_2}) \\
y \check{s}^\ast(0,y) & = & \check{s}^\ast(0,1)
\end{eqnarray} 
which are immediate consequences of the definition
(\ref{cspm}).

\subsection{Scaling properties of the DSF}

With the critical observation scale
\begin{equation}
\chi^\ast = \min{(3/2+\kappa,2)}
\label{critchi}
\end{equation}
Corollary \ref{Coro:1} can be 
rephrased as
\begin{equation}
S^\ast(p,t) = \frac{K}{\sqrt{2\pi}} \left\{
\begin{array}{ll}
\begin{array}{ll}
1 &  \\
\end{array}  & \quad \chi < \chi^\ast \\[2mm]
\left\{
\begin{array}{ll}
s(0,u_2) & \kappa > 1/2 \\
s(u_1,u_2) & \kappa = 1/2 \\
s(u_1,0) & 0 \leq \kappa < 1/2
\end{array} \right. & \quad \chi = \chi^\ast  \\[6mm]
\begin{array}{ll}
0 &  \\
\end{array}  & \quad \chi > \chi^\ast.
\end{array} \right. 
\label{MCTmesosol}
\end{equation}

On mesoscopic scale the solution is scale invariant for the critical value $\chi=\chi^\ast$ 
except at the point $\kappa=1/2$ and has the superdiffusive
KPZ dynamical exponent 
$z=3/2$ for $0\leq\kappa<1/2$ and the diffusive EW critical exponent 
$z=2$ for $\kappa>1/2$. The scaling solution is, as expected for
dynamical critical behaviour, independent of the microscopic
parameters except that they determine the non-universal scale
parameters $l_i$. On the moderately asymmetric 
KPZ line $0\leq\kappa<1/2$ -- which
includes the point $\kappa=0$ of strong asymmetry --
the DSF is given by the solution of the integral equation
\begin{equation}
\tilde{s}^{\ast} (x,0)
=  \left[1+ x^2 \tilde{m}^{\ast}(x,0) \right]^{-1}, 
\quad x \in \R_0^+
\label{MCTintKPZ}
\end{equation}
and was studied numerically in \cite{Frey96}. Comparison
with the exact Praehofer-Spohn scaling function for the KPZ fixed point 
shows that the MCT solution is not exact, but numerically not very far
from the exact curve \cite{Prae04}. 
On the weakly asymmetric EW line $\kappa>1/2$ the MCT
integral equation reduces to the simple linear equation
\begin{equation}
\tilde{s}^{\ast} (0,y)
=  \left(1+ y \right)^{-1}, 
\quad x \in \R_0^+
\label{MCTintEW}
\end{equation}
solved by a Gaussian which represents the exact EW
critical DSF implied by the rigorous treatment in \cite{Gonc12}.
The crossover point $\kappa=1/2$ 
between the two universality classes coincides with the point where
the fluctuations are governed by the SBE \cite{Gonc12}.
The absence of scale invariance of the SBE is mirrored in the
absence of scale invariance of the MCT solution 
(\ref{MCTmesosol}) obtained above. 

The trivial subcritical behaviour simply means that on slow 
observations scales $\chi < \chi^\ast$ (short times) the
spreading of the initially strictly local correlation
is not noticeable. Only the ballistic motion of the density peak
with rescaled velocity $v=b(1-2\rho)$ is captured by the DSF.

On the other hand, on very fast observation scales 
$\chi > \chi^\ast$ (long times) the fluctuations are uncorrelated,
corresponding to complete relaxation of the initial perturbation
of the invariant measure. However, the time integral
of Corollary \ref{Coro:2} ``remembers'' the initial 
space-time correlations and results in spatial long-range
correlations of the integrated DSF. For $\kappa > 1/2$ this 
behaviour is reminiscent of the distribution of density fluctuations 
of the open SSEP which converge to a gaussian space-time
field that is delta correlated in time but with long-range correlations 
in space \cite{Bern23}.

It is also interesting to note
 the {\it macroscopic} limit (\ref{Smacrodef}) given by 
\begin{equation}
S^{\bullet}(p,t) = \frac{K}{\sqrt{2\pi}} \left\{
\begin{array}{ll}
\begin{array}{ll}
1 &  \\
\end{array}  & \quad \chi < \chi^\ast \\[2mm]
\left\{
\begin{array}{ll}
s(0,\varepsilon^{1-a/z_2} u_2) 
& \kappa > 1/2 \\
s(\varepsilon^{1-a/z_1} u_1,\varepsilon^{1-a/z_2} u_2) 
& \kappa = 1/2 \\
s(\varepsilon^{1-a/z_1} u_1,0) 
& 0 \leq \kappa < 1/2 
\end{array} \right. & \quad \chi = \chi^\ast  \\[6mm]
\begin{array}{ll}
0 &  \\
\end{array}  & \quad \chi > \chi^\ast
\end{array} \right. 
\label{MCTmacrosol}
\end{equation}
with the scaling variables $u_i = u_i(p,t)$. 
The macroscopic DSF is scale invariant for all 
$a>0$ with non-trivial scale invariance appearing in the critical regime $\chi = \chi^\ast$, {\it including} the point $\kappa=1/2$. At the 
macroscopic observation scale set by $a=z_1=3/2$ 
this yields
\begin{equation}
S^{\bullet}(p,t) = 
\frac{K}{\sqrt{2\pi}} \left\{
\begin{array}{ll}
1  & \kappa > 1/2 \\
s(u_1,0) & \kappa = 1/2 \\
s(u_1,0) & 0 \leq \kappa < 1/2
\end{array} \right. 
\end{equation}
which exhibits at the crossover point $\kappa=1/2$ the same KPZ-like
behaviour as the range $0\leq \kappa<1/2$ where the
WASEP fluctuations are believed to be governed by the
KPZ fixed point, see \cite{Mate21,Quas23} for recent discussions.
On the other hand, for long time
scales with $a=z_2=2$ the mesoscale MCT predicts
\begin{equation}
S^{\bullet}(p,t) = 
\frac{K}{\sqrt{2\pi}} \left\{
\begin{array}{ll}
s(0,u_2) & \kappa > 1/2 \\
0 & \kappa = 1/2 \\
0 & 0 \leq \kappa < 1/2.
\end{array} \right. 
\end{equation}
Thus at the macroscopic diffusive observation scale $a=2$ decorrelation has set in 
at and below the crossover
point $\kappa=1/2$, i.e., for moderate and critical bias.
(Fig. \ref{Fig:WASEPMCT}).

\begin{figure}[t]
\includegraphics[scale=.65]{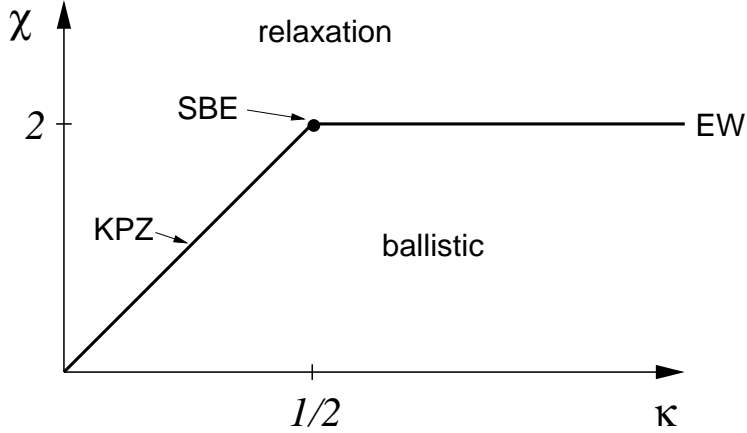}
%
%
\caption{Mesoscale MCT prediction of the phase diagram of the WASEP as a function of the 
bias exponent $\kappa$ and the observation scale
set by $\chi$. The critical line
(\ref{critchi})
corresponds to KPZ universality ($\kappa<1/2$), the
SBE point $\kappa=1/2$, and EW universality ($\kappa>1/2$).}
\label{Fig:WASEPMCT}       
\end{figure}

\section{Conclusions}

The main methodical novelties of the mesoscale scale approach to MCT 
are the absence of an {\it a priori} scaling assumption and the 
distinction between a mesoscopic and macroscopic space-time scale
(Def. \ref{Def:mesomacroscale}). Technically, this is facilitated by 
using the variables (\ref{zetadef}) which have simple scaling 
properties rather than the traditional microscopic space-time variables 
momentum $q$ and time $\tau$. As a result we obtain on mesoscopic 
scale for $\kappa=1/2$ a solution which is not
scale invariant but has features expected from the stochastic
Burgers equation which describes the fluctuations at this non-critical
crossover point between the (expected) KPZ
to the (established) EW universality class. For moderate
asymmetry with $\kappa<1/2$ MCT predicts the critical line
(\ref{critchi}) of non-trivial scale invariant behavior which is 
expected to be in the KPZ universality class, while for weak 
asymmetry $\kappa>1/2$ one gets EW criticality. 

There is no control over the approximation leading to the
discrete SBE and the mode coupling approximations (1) and (2). Only
the discrete gradient terms neglected in arriving at (\ref{apprx3})
are easily shown to lead to controlled subleading contributions to the 
large scale behaviour. It is therefore encouraging to see that
the line (\ref{critchi}) is consistent with an exact result, viz., 
the scaling $L^{-\chi^\ast}$ of the spectral gap of the generator 
of the WASEP with system size obtained from Bethe ansatz 
\cite{Kim95}. Moreover, for $\kappa >1/2$, MCT yields the 
EW universality proved 
mathematically rigorously in \cite{Gonc12}. Thus, even though the 
actual scaling form of the DSF does not match the exact 
Praehofer-Spohn scaling function in the KPZ regime, it appears that 
mesoscale MCT can predict correctly various non-trivial features of the 
DSF. In particular, it seems feasible to study in an approximate sense 
the crossover between KPZ and EW universality as well as diffusive 
corrections. The mesoscale approach also lends itself to generalization
to weakly nonlinear processes with more than one conserved quantity.

For using the mesoscale MCT to study processes other than the WASEP
it should be noted that
in Sec. 3 the microscopic mode coupling equations were derived 
directly from the Markov generator. This shortcut, bypassing nonlinear 
fluctuating hydrodynamics, is generally possible when the microscopic 
invariant measure is a product measure since in that case the
generator acting on the centered local random variables
produces the correct macroscopic current and therefore
the exact collective velocity and coupling strength that enter
the discrete SBE and consequently the mode coupling approximation. In the presence of correlations, however, one needs to
compute the exact current and appeal to nonlinear fluctuating 
hydrodynamics \cite{Spoh14} to postulate the correct
discrete stochastic partial differential equation on which MCT
is based.

\section*{Acknowledgements}
It is a pleasure to acknowledge stimulating discussions with
P. Gon\c{c}alves and M. Jara. This work is financially 
supported by FCT/Portugal through CAMGSD, IST-ID, Projects  UIDB/04459/2020 and UIDB/04459/2020 and by the FCT Grant  2022.09232.PTDC.

\end{document}